
\documentclass[aps,pra,10pt,tightenlines,longbibliography,eqsecnum,notitlepage,amsmath,twocolumn,floatfix,superscriptaddress]{revtex4-1}
\usepackage[dvipsnames]{xcolor}
\usepackage[colorlinks=true,bookmarks=false,linkcolor=NavyBlue,urlcolor=NavyBlue,citecolor=NavyBlue,breaklinks]{hyperref}
\usepackage{amsmath}
\usepackage{amsfonts}
\usepackage{tabularx}
\usepackage{graphicx}
\usepackage{times}
\usepackage{mathtools}
\usepackage{braket}
\usepackage{amsthm}

\newtheorem{theorem}{Theorem}

\newtheorem{proposition}{Proposition}

\theoremstyle{remark}

\newcommand{\parti}[2]{\frac{\partial #1}{\partial #2}}

\newcommand{\intall}{\int_{-\infty}^{\infty}}
\newcommand{\avg}[1]{\langle#1\rangle}
\newcommand{\Avg}[1]{\left\langle#1\right\rangle}

\newcommand{\abs}[1]{\left|#1\right|}
\newcommand{\bk}[1]{\left(#1\right)}
\newcommand{\Bk}[1]{\left[#1\right]}
\newcommand{\BK}[1]{\left\{#1\right\}}
\newcommand{\trace}{\operatorname{tr}}

\newcommand{\expect}{\mathbb E}

\newcommand{\norm}[1]{\lVert#1\rVert}
\newcommand{\prior}{\rho}
\newcommand{\dop}{\varrho}
\newcommand{\range}{\operatorname{range}}

\begin{document}

\title{Physics-inspired forms of the Bayesian Cram\'er-Rao bound}

\author{Mankei Tsang}
\email{mankei@nus.edu.sg}
\homepage{https://blog.nus.edu.sg/mankei/}
\affiliation{Department of Electrical and Computer Engineering,
  National University of Singapore, 4 Engineering Drive 3, Singapore
  117583}

\affiliation{Department of Physics, National University of Singapore,
  2 Science Drive 3, Singapore 117551}

\date{\today}


\begin{abstract}
  Using differential geometry, I derive a form of the Bayesian
  Cram\'er-Rao bound that remains invariant under reparametrization.
  With the invariant formulation at hand, I find the optimal and
  naturally invariant bound among the Gill-Levit family of bounds. By
  assuming that the prior probability density is the square of a
  wavefunction, I also express the bounds in terms of functionals that
  are quadratic with respect to the wavefunction and its gradient. The
  problem of finding an unfavorable prior to tighten the bound for
  minimax estimation is shown, in a special case, to be equivalent to
  finding the ground state of a Schr\"odinger equation, with the
  Fisher information playing the role of the potential.  To illustrate
  the theory, two quantum estimation problems, namely, optomechanical
  waveform estimation and subdiffraction incoherent optical imaging,
  are discussed.
\end{abstract}

\maketitle
\section{Introduction}
Differential geometry has been useful in the study of statistical
divergence measures, Cram\'er-Rao bounds, and asymptotic statistics
\cite{amari,*amari16,bickel93,tsang20}, but its usefulness for
Bayesian and minimax statistics is less clear. The Bayesian
Cram\'er-Rao bounds \cite{bell,gill95}, pioneered by Sch\"utzenberger
\cite{schutzenberger57} and Van Trees \cite{vantrees}, may serve as a
bridge.

To set the stage, consider a $p$-dimensional parameter
$\theta = (\theta^1,\dots,\theta^p) \in \Theta \subseteq \mathbb R^p$,
a scalar parameter of interest $\beta(\theta) \in \mathbb R$ that is a
function of $\theta$, and an estimator $\check\beta(X)$, where $X$ is
a set of $n$ independent and identically distributed (i.i.d.)
observation random variables with a family of probability densities
$\{f^{(n)}(x|\theta) = \prod_{j=1}^n f(x^j|\theta):\theta \in\Theta\}$
and a reference measure $\mu$ that gives
$d\mu^{(n)}(x) = \prod_{j=1}^n d\mu(x_j)$.  Generalization of the
theory for a vectoral $\beta$ is straightforward but tedious and
deferred to Appendix~\ref{sec_vectoral}. Define the mean-square risk
as
\begin{align}
\mathsf R(\theta) &\equiv 
\int \Bk{\check\beta(x)-\beta(\theta)}^2 f^{(n)}(x|\theta) d\mu^{(n)}(x).
\end{align}
The Cram\'er-Rao bound for
any unbiased estimator is given by
\begin{align}
\mathsf R(\theta) &\ge \frac{\mathsf C(\theta)}{n},
\label{CRB_unbiased}
\\
\mathsf C(\theta) &\equiv u_a(\theta)\Bk{F(\theta)^{-1}}^{ab} u_b(\theta),
\label{CRB}
\end{align}
where 
\begin{align}
u_a &\equiv \partial_a\beta,
&
\partial_a &\equiv \parti{}{\theta^a},
\end{align}
Einstein summation is assumed, $F$ is the Fisher information
matrix defined as
\begin{align}
F_{ab} &\equiv \int \bk{\partial_a \ln f}
\bk{\partial_b \ln f} f d\mu,
\end{align}
$F^{-1}$ is its inverse such that $F_{ab}(F^{-1})^{bc} = \delta_a^c$,
and $\delta$ is the Kronecker delta. For simplicity, hereafter I call
Eqs.~(\ref{CRB_unbiased}) and (\ref{CRB}) the local bound, and
  the theory concerning $\mathsf C(\theta)$ the local theory, as
  $\mathsf C(\theta)$ depends only on the local properties of the
  statistical model in the neighborhood of $\theta$.

The restriction to unbiased estimators is one of the biggest
shortcomings of the local bound. A fruitful remedy is to consider
bounds on the Bayesian risk
\begin{align}
\Avg{\mathsf R} &= 
\expect\Bk{\bk{\check\beta-\beta}^2} = 
\int \mathsf R(\theta)\pi(\theta) d^p\theta,
\label{prior}
\end{align}
where $\expect$ denotes the expectation over both the observation and
the parameter as random variables and $\pi$ is a prior probability
density \cite{bell}.  In particular, Gill and Levit proposed a general
family of Bayesian Cram\'er-Rao bounds, valid for any biased or
unbiased estimator, given by \cite{gill95}
\begin{align}
\Avg{\mathsf R} &\ge \mathsf B
\equiv \frac{\Avg{\mathsf A}^2}{n\Avg{\mathsf F} + \Avg{\mathsf P}},
\label{GL}
\\
\mathsf A &\equiv v^au_a,
\label{A}
\\
\mathsf F &\equiv v^a F_{ab} v^b,
\\
\mathsf P &\equiv \Bk{\frac{1}{\pi}\partial_a\bk{\pi v^a}}^2,
\label{P}
\end{align}
where $v$, $\mathsf A$, $\mathsf F$, and $\mathsf P$ are all functions
of $\theta$, $\pi v$ is assumed to vanish on the boundary of $\Theta$,
and $\avg{\cdot}$ denotes the prior expectation, as in
Eq.~(\ref{prior}).

This work studies only the bound $\mathsf B$; the attainability of the
bound is outside the scope of this work.  Some recently proposed
Bayesian Cram\'er-Rao bounds \cite{bacharach19} may not fall under the
Gill-Levit family and are also outside the scope of this work. There
also exist many other types of Bayesian bounds that may be tighter,
such as the Ziv-Zakai bounds and the Weiss-Weinstein bounds
\cite{bell}, but the Cram\'er-Rao bounds are often much easier to
compute because they are based on the Fisher information, a
well-studied quantity.

  In Eqs.~(\ref{A})--(\ref{P}), $v:\mathbb R^p \to \mathbb R^p$
  is a free term, and by choosing it judiciously, many useful forms of
  $\mathsf B$ can be obtained \cite{gill95}. An arbitrarily chosen
  $v$, however, may lead to a $\mathsf B$ that varies if the
  parametrization of the underlying model with respect to $\theta$ is
  changed. To give a simple example, suppose that $p = 1$, $\theta$ is
  a scalar, and $\beta = \theta$. Consider the Gill-Levit bound for
  $v^1 = 1$.  If the parametrization of the underlying statistical
  model is changed, say, via the relation
  $\theta = \tilde\theta^{1/3}$ and $\tilde\theta = \theta^3$, then
  the Gill-Levit bound for $\beta = \theta = \tilde\theta^{1/3}$ and
  $v^1 = 1$ would usually be different when computed with respect to
  the new parameter $\tilde\theta$, even if the statistical problem
  remains the same.  This property is unpleasant, as there can be
  infinitely many parametrizations for the same model and it is not
  clear which parametrization leads to the tightest bound for a given
  problem. Note that the local bound given by Eq.~(\ref{CRB}) does not
  suffer from such a problem, as it is well known to be invariant upon
  reparametrization \cite{stein56}.  In Sec.~\ref{sec_invar}, I
  propose a condition on $v$ that makes $\mathsf B$ invariant. I also
  derive an invariant form of $\mathsf B$ using the language of
  differential geometry \cite{carroll}. With the invariant form,
  $\mathsf B$ is guaranteed to give the same value for a model,
  regardless of the parametrization.

A related question is how $v$ should be chosen. Although Gill and
Levit suggested a few options based on prior works or convenience, it
is unclear which is better, or if there exists an optimal choice.  In
Sec.~\ref{sec_optimal}, I show that there is indeed an optimal choice,
and it agrees with a couple of popular options in special cases. The
inspiration comes from the geometric picture of $v$ as a vector field,
which generalizes the role of a tangent vector in the local theory
\cite{bickel93,stein56}. By virtue of the invariant formalism, the
resultant bound is naturally invariant.

Bayesian bounds are also useful for
minimax statistics \cite{tsybakov} by providing lower bounds on the
worst-case risk via
\begin{align}
\sup_{\theta\in\Theta}\mathsf R(\theta) &\ge \Avg{\mathsf R}
\label{minimax}
\end{align}
for any prior. In this context, one should no longer choose the
  prior according to Bayesian principles. Instead, one should choose
  an unfavorable prior as a mathematical device to tighten a lower
  bound.  Given Eqs.~(\ref{GL})--(\ref{P}), it is unclear how the
prior should be chosen, as $\avg{\mathsf P}$ is highly nonlinear with
respect to $\pi$. To help with this problem, in Sec.~\ref{sec_wave} I
rewrite Eqs.~(\ref{GL})--(\ref{P}) in a form that looks more familiar,
at least to physicists. To be specific, I identify the prior density
with the square of a wavefunction, such that $\avg{\mathsf A}$,
$\avg{\mathsf F}$, and most importantly $\avg{\mathsf P}$ all become
quadratic functionals of the wavefunction and its gradient. In a
special case, $n\avg{\mathsf F} + \avg{\mathsf P}$ becomes the average
energy of a wave that obeys a Schr\"odinger equation. Finding the
tightest bound for minimax estimation then becomes equivalent to
finding the ground-state energy of the wave, and insights from quantum
mechanics turn out to be handy.

In terms of other prior works, Refs.~\cite{jupp10,kumar18} also study
Bayesian Cram\'er-Rao bounds in geometric terms, but do not discuss
the question of invariance or find the optimal Gill-Levit
bound. References~\cite{abushanab15,*koike20} derive the
asymptotically optimal form of the Gill-Levit bounds, but do not find
the exact optimal form. Example~4.2 in Ref.~\cite{bobrovsky} studies
the optimization of a Bayesian Cram\'er-Rao bound for a special
problem, but not in the generality considered here.  Regarding the
wave picture, the fact that $F$ is quadratic with respect to
$\partial_a(f^{1/2})$ is well known in statistics \cite{amari}, and
Frieden even claimed that it serves as a fundamental principle for
physics \cite{frieden}. He assumed that $f$ is the square of a
wavefunction and derived wave equations from this fact, but had to
introduce further creative assumptions. He also did not consider
Bayesian bounds.  To my knowledge, the wave picture of a Bayesian
Cram\'er-Rao bound is first proposed in Ref.~\cite{tsang18}, which
considers the special case $\beta = \theta$ with a scalar $\theta$ and
uses the wave picture as a trick to solve a parameter-estimation
problem in optical imaging. Here, as before \cite{tsang18}, I do not
claim that my results have any foundational implications for physics,
merely that the correspondence is interesting and useful for
statistics problems.

Section~\ref{sec_quantum} comes full circle and applies the
statistical theory to quantum estimation \cite{helstrom,hayashi},
where actual quantum systems are considered. I consider two important
problems in quantum optics, namely, optomechanical waveform estimation
\cite{braginsky,twc} and subdiffraction incoherent imaging
\cite{helstrom,tnl,*tsang19a}. The first problem is relevant to
gravitational-wave detectors, where quantum noise is now playing a
major role \cite{ligo16c,*miao17,*tse19,*acernese19,*yu20}; I show the
importance of including prior information in deriving a meaningful
quantum limit in terms of spectral quantities, following
Ref.~\cite{twc}. The second problem is, of course, a fundamental one
in optics and relevant to both fluorescence microscopy and
observational astronomy. Recent studies, based on quantum estimation
theory, have shown that judicious measurements can substantially
improve the imaging of subdiffraction objects \cite{tnl}, although
most prior works are based on the local bound, which is valid for
unbiased estimators only. By considering the minimax perspective, the
Bayesian bound, and the wave picture, I discuss the implication of a
zero information for the estimator convergence rate for the
multi-source localization problem studied in
Refs.~\cite{tnl,paur18,*paur19,bisketzi19}.

\section{\label{sec_invar}Invariance}
To model reparametrization, consider a bijective differentiable map
$\tilde\theta(\theta)$.  The transformation laws are
\begin{align}
\partial_a &=  J_a^b {\tilde \partial}_b,
&
{\tilde \partial}_a  &\equiv \parti{}{{\tilde\theta}^a},
\\
d^p\theta &= \frac{d^p\tilde\theta}{\norm{J}},
&
\pi &= \norm{J} \tilde\pi,
\\
u_a &= J_a^b \tilde u_b,
&
F_{ab} &= J_a^c {\tilde F}_{cd} J_b^d,
\label{covariant}
\end{align}
where 
\begin{align}
J_a^b &\equiv \partial_a\tilde\theta^b
\end{align}
is the Jacobian matrix, $|J|$ denotes its determinant, and $\norm{J}$
denotes the absolute value of the determinant.
Equations~(\ref{covariant}) imply that the components of $u$ are
covariant and $F$ is a $(0,2)$ tensor. On the other hand, $\beta$,
$\check\beta$, $f$, $\mu$, $\mathsf R$, and $\avg{\cdot}$ remain
invariant in the sense that 
  $\beta(\theta) = \tilde\beta(\tilde\theta(\theta))$,
  $f(x|\theta) = \tilde f(x|\tilde\theta(\theta))$,
  $\mathsf R(\theta) = \tilde{\mathsf R}(\tilde\theta(\theta))$,
  etc., as these quantities depend on the statistical problem and
should not depend on the parametrization of the underlying model.

It is well known that the local bound is invariant under
reparametrization \cite{stein56}, in the sense of
\begin{align}
u_a\bk{F^{-1}}^{ab} u_b = 
{\tilde u}_a \big({\tilde F}^{-1}\big)^{ab} {\tilde u}_b.
\end{align}
The Gill-Levit bounds can also be made invariant.

\begin{proposition}
  $\mathsf B$ is invariant under reparametrization if $v$ obeys the
  transformation law
\begin{align}
v^a  J_a^b &=  \tilde v^b.
\label{contra}
\end{align}
\label{prop_contra}
\end{proposition}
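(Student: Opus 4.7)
The plan is to verify that, under the transformation law \eqref{contra}, each of the three ingredients $\mathsf A$, $\mathsf F$, and $\mathsf P$ is a scalar (it takes the same value at corresponding points regardless of parametrization) and that the prior expectation $\avg{\cdot}$ is invariant. Because $\mathsf B$ is an algebraic combination of $\avg{\mathsf A}^2$, $\avg{\mathsf F}$, and $\avg{\mathsf P}$, its invariance then follows immediately.

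First I would dispose of $\mathsf A$ and $\mathsf F$ by routine index manipulation. The covariance of $u$ and the hypothesis $v^a J_a^b=\tilde v^b$ give $v^a u_a = v^a J_a^b\tilde u_b = \tilde v^b \tilde u_b$. For $\mathsf F$, the two Jacobians in $F_{ab}=J_a^c\tilde F_{cd}J_b^d$ absorb the two factors of $v$ to yield $v^a F_{ab}v^b=\tilde v^c\tilde F_{cd}\tilde v^d$. So both $\mathsf A$ and $\mathsf F$ are pointwise invariant under the change of parametrization.

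The main obstacle is $\mathsf P$: neither the density $\pi$ nor the bare partial derivatives $\partial_a$ are individually tensorial, so $\partial_a(\pi v^a)$ has to be handled as a whole. Geometrically, $(1/\pi)\partial_a(\pi v^a)$ is the divergence of the vector field $v$ with respect to the invariant volume form $\pi\, d^p\theta$, and so ought to be a scalar. To prove this without invoking further geometric machinery, I would test it against an arbitrary smooth, compactly supported function $g$ (so that boundary terms vanish). Evaluating $\int g\,(1/\pi)\partial_a(\pi v^a)\,\pi\, d^p\theta$ and integrating by parts gives $-\int(\partial_a g)\,v^a\,\pi\, d^p\theta$. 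Using $\partial_a g = J_a^b\,\tilde\partial_b\tilde g$, the contravariance $v^a J_a^b=\tilde v^b$, and the invariance $\pi\, d^p\theta=\tilde\pi\, d^p\tilde\theta$, the integrand rewrites as $(\tilde\partial_b\tilde g)\tilde v^b\,\tilde\pi\, d^p\tilde\theta$; a second integration by parts returns $\int\tilde g\,(1/\tilde\pi)\tilde\partial_b(\tilde\pi\tilde v^b)\,\tilde\pi\, d^p\tilde\theta$. Since $g$ is arbitrary, $(1/\pi)\partial_a(\pi v^a)$ and $(1/\tilde\pi)\tilde\partial_b(\tilde\pi\tilde v^b)$ coincide pointwise, and squaring preserves this equality, so $\mathsf P$ is a scalar as well.

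With all three scalars in hand, and because $\pi\, d^p\theta$ itself is invariant, the prior expectations $\avg{\mathsf A}$, $\avg{\mathsf F}$, $\avg{\mathsf P}$ are individually invariant. Substituting into $\mathsf B=\avg{\mathsf A}^2/(n\avg{\mathsf F}+\avg{\mathsf P})$ finishes the proof. The only step requiring care is the divergence identity for $\mathsf P$; everything else is mechanical.
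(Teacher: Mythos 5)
Your proposal is correct, and it handles $\mathsf A$ and $\mathsf F$ exactly as the paper does, but your treatment of the divergence term $\mathsf P$ takes a genuinely different route. The paper proves the identity $\frac{1}{\pi}\partial_a(\pi v^a) = \frac{1}{\tilde\pi}\tilde\partial_b(\tilde\pi\tilde v^b)$ by a direct pointwise computation: it expands the left-hand side in the new coordinates, isolates the residual term $\tilde v^c\, (J_a^b/|J|)\,\tilde\partial_b(|J|\tilde J_c^a)$, and kills it using Jacobi's formula for $\tilde\partial_c\ln|\tilde J|$ together with the symmetry of mixed partials $\tilde\partial_c\tilde\partial_b\theta^a = \tilde\partial_b\tilde\partial_c\theta^a$. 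You instead prove the same identity weakly, by pairing the divergence with an arbitrary compactly supported test function $g$, integrating by parts, transporting the first-order objects ($\partial_a g$, $v^a$, $\pi\, d^p\theta$) with the chain rule and the hypotheses, and integrating by parts back; this entirely avoids Jacobi's formula and any second derivatives of the coordinate change, at the cost of an appeal to the fundamental lemma of the calculus of variations, which requires continuity of the integrands and positivity of $\pi$ on its support to upgrade the integral identity to a pointwise one (a mild assumption, since $1/\pi$ must be defined for $\mathsf P$ to make sense, and the differentiability needed for the integrations by parts is already needed for $\mathsf P$ itself). Your argument also has the conceptual merit of exhibiting $\frac{1}{\pi}\partial_a(\pi v^a)$ as the formal adjoint of $-v^a\partial_a$ with respect to the invariant measure $\pi\, d^p\theta$, which anticipates the self-adjoint operator structure ($P$ and $\mathsf D^\dagger\mathsf D$) that the paper exploits later in Theorem~\ref{thm_optimal} and Sec.~\ref{sec_wave}.
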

\begin{proof}
Given Eq.~(\ref{contra}), it is obvious that
\begin{align}
\mathsf A &= \tilde v^a \tilde u_a,
&
\mathsf F &= \tilde v^a \tilde F_{ab} \tilde v^b
\end{align}
remain invariant upon reparametrization. To deal with $\mathsf P$,
define the inverse Jacobian matrix  as
\begin{align}
\tilde J_a^b &\equiv \tilde\partial_a\theta^b,
\end{align}
which obey
\begin{align}
J_a^b\tilde J_b^c &= \tilde J_a^b J_b^c = \delta_a^c,
&
|\tilde J| &= \frac{1}{|J|}.
\end{align}
Consider
\begin{align}
\frac{1}{\pi}\partial_a\bk{\pi v^a} 
&= \frac{J_a^b}{|J|\tilde\pi}
\tilde\partial_b\bk{|J|\tilde\pi \tilde v^c\tilde J_c^a}
\\
&= \tilde v^c
\frac{J_a^b}{|J|}\tilde\partial_b\bk{|J|\tilde J_c^a}
+ \frac{1}{\tilde\pi}\tilde\partial_b \bk{\tilde\pi\tilde v^b}.
\end{align}
The first term can be shown to vanish as follows:
\begin{align}
\frac{J_a^b}{|J|}\tilde\partial_b\bk{|J|\tilde J_c^a}
&= \tilde\partial_c \ln |J|+ J_a^b\tilde\partial_b \tilde J_c^a
\\
&= -\tilde\partial_c \ln |\tilde J|
+ J_a^b\tilde\partial_b \tilde J_c^a
\\
&= - J_a^b \tilde\partial_c \tilde J_b^a
+ J_a^b\tilde\partial_b \tilde J_c^a
\label{formula}
\\
&= -J_a^b \bk{\tilde\partial_c \tilde\partial_b \theta^a
-\tilde\partial_b\tilde\partial_c\theta^a} = 0,
\end{align}
where Eq.~(\ref{formula}) uses Jacobi's formula to simplify
$\tilde\partial_c \ln|\tilde J|$. Hence
\begin{align}
\frac{1}{\pi}\partial_a\bk{\pi v^a} 
&=  \frac{1}{\tilde\pi}\tilde\partial_b \bk{\tilde\pi\tilde v^b},
\end{align}
and $\mathsf P$ is invariant. As the prior expectation $\avg{\cdot}$
is also invariant, $\mathsf B$ is invariant.
\end{proof}

In the language of differential geometry, Eq.~(\ref{contra}) means
that the components of $v$ are contravariant. In other words, $v$
defines a vector field in the parameter space $\Theta$, with
components $(v^1,\dots,v^p)$ with respect to a parametrization. If one
does not transform the components as per Eq.~(\ref{contra}) upon
reparametrization, $\mathsf B$ changes---the reason, from the
geometric perspective, is that it has become a bound for a different
vector field. For someone familiar with differential geometry,
Prop.~\ref{prop_contra} may seem trivial in hindsight, but this
triviality should be regarded as a virtue---it is evidence that
differential geometry is useful in simplifying the problem here.

A ``natural'' choice of the $v$ components according to Gill and
Levit is \cite{gill95}
\begin{align}
v^a &= (F^{-1})^{ab}u_b.
\label{natural}
\end{align}
This form is contravariant, in the sense that Eq.~(\ref{natural})
  for one parametrization and
  $\tilde v^a = (\tilde F^{-1})^{ab}\tilde u_b$ for another
  parametrization obey Eq.~(\ref{contra}) and must give the same bound
  for a given problem.  This choice also leads to the simplification
\begin{align}
\mathsf A &= \mathsf F = u_a(F^{-1})^{ab}u_b = \mathsf C,
\end{align}
which coincides with the local bound given by Eq.~(\ref{CRB}). The
resultant Bayesian bound is
\begin{align}
\mathsf B &= \frac{\avg{\mathsf C}^2}{n\avg{\mathsf C} + \avg{\mathsf P}}.
\label{borovkov}
\end{align}
For a scalar $\theta$, this becomes an inequality of Borovkov and
Sakhanenko \cite{borovkov80}; see also Ref.~\cite{borovkov}.  Most
importantly, Eq.~(\ref{borovkov}) agrees with some classic theorems in
the asymptotic local theory by H\'ajek and Le Cam that generalize the
Cram\'er-Rao bound but are much more sophisticated
\cite{gill95,vaart}.  Equation~(\ref{natural}) is not the only
contravariant choice, however. It does not even exist if $u$ is not in
the range of the $F$ matrix \cite{stoica01}. It is also not the
optimal choice for the Gill-Levit bounds in general, as
Sec.~\ref{sec_optimal} later shows.

Another useful choice of the $v$ components is
\begin{align}
v^a &= \Bk{\bk{n\Avg{F} + \Avg{G}}^{-1}}^{ab} \Avg{u_b},
\label{vt}
\\
G_{ab} &\equiv 
\frac{1}{\pi}\bk{\partial_a \pi}\frac{1}{\pi}\bk{\partial_b \pi},
\end{align}
leading to
\begin{align}
\mathsf B &= 
\Avg{u_a} \Bk{\bk{n\Avg{F} + \Avg{G}}^{-1}}^{ab} \Avg{u_b}.
\label{B_original}
\end{align}
If $u$ is $\theta$-independent, Eq.~(\ref{B_original}) coincides with
the original version by Sch\"utzenberger and Van Trees
\cite{schutzenberger57,vantrees}.  $\avg{G}$ plays the role of prior
information and can regularize the inverse when $\avg{F}$ is
ill-conditioned. The regularization is especially important for
waveform-estimation problems \cite{vantrees,twc}. The form of
  Eq.~(\ref{vt}) is usually not contravariant, however, in the sense
  that, except for special cases, Eq.~(\ref{vt}) for one
  parametrization and
  $\tilde v^a = [(n\avg{\tilde F} + \avg{\tilde G})^{-1}]^{ab}
  \avg{\tilde u_b}$ for another parametrization do not obey
  Eq.~(\ref{contra}), and the resultant bounds may be different for a
  given problem.

In the following, I generalize $\Theta$, the parameter space, to
  a $p$-dimensional manifold, and assume that $v$ is a vector field on
  the manifold. The formalism can then be made more elegant by
defining the invariant quantities
\begin{align}
\epsilon &\equiv \sqrt{|g|} d^p\theta,
&
\prior &\equiv \frac{\pi}{\sqrt{|g|}},
&
\pi d^p\theta &= \prior \epsilon,
\end{align}
where $|g|$ is the determinant of a Riemannian (positive-definite)
metric $g_{ab}$. It should be emphasized that the metric here is
merely a mathematical tool to keep track of parametrization invariance
and deal with more general manifolds for $\Theta$, and this work is
not concerned with the concept of statistical manifolds and distances
between probability measures in information geometry
\cite{amari}. Although many have argued that the Fisher information is
a natural metric in information geometry \cite{amari}, there is no
particular reason to pick the Fisher information as the metric
here. That choice may also cause problems if $|F| = 0$, so I keep the
metric unspecified here for generality.  The divergence term in
Eq.~(\ref{P}) becomes
\begin{align}
\frac{1}{\pi}\partial_a\bk{\pi v^a} 
&= \frac{1}{\sqrt{|g|}\prior} \partial_a\bk{\sqrt{|g|} \prior v^a}
= \frac{1}{\prior} \nabla_a\bk{\prior v^a},
\end{align}
where $\nabla_a$ is the Riemannian covariant derivative. With these
suggestive expressions at hand, I propose the following.

\begin{proposition}[Invariant Gill-Levit bounds]
  If $\prior v$ vanishes on any boundary of the parameter manifold
  $\Theta$, the Bayesian mean-square risk has a lower bound given by
  Eq.~(\ref{GL}), where
\begin{align}
\Avg{\mathsf A} &= \int \bk{v^a u_a}  \prior \epsilon,
\label{invariant_N}
\\
\Avg{\mathsf F} &= \int \bk{v^a F_{ab} v^b} \prior \epsilon,
\label{invariant_F}
\\
\Avg{\mathsf P} &= 
\int  \Bk{\frac{1}{\prior}\nabla_a\bk{\prior v^a}}^2 \prior\epsilon.
\label{invariant_P}
\end{align}
\label{prop_invariant}
\end{proposition}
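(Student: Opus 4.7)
The plan is to observe that the three integrals in Eqs.~(\ref{invariant_N})--(\ref{invariant_P}) are just coordinate-free rewritings of the prior expectations $\avg{\mathsf A}$, $\avg{\mathsf F}$, and $\avg{\mathsf P}$ built from Eqs.~(\ref{A})--(\ref{P}), and then to invoke the ordinary Gill-Levit bound~(\ref{GL}) directly. No new estimation-theoretic content is needed; the derivation is essentially a dictionary translation whose purpose is to put invariance on a manifest footing.

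First I would note that by construction $\prior\epsilon = (\pi/\sqrt{|g|})\sqrt{|g|}\,d^p\theta = \pi\,d^p\theta$, so the measure is unchanged. Since the integrands $v^au_a$ and $v^aF_{ab}v^b$ also coincide with those appearing in the definitions of $\mathsf A$ and $\mathsf F$, Eqs.~(\ref{invariant_N}) and (\ref{invariant_F}) are term-by-term equal to $\avg{\mathsf A}$ and $\avg{\mathsf F}$. The remaining step is to identify the divergence term $\prior^{-1}\nabla_a(\prior v^a)$ with $\pi^{-1}\partial_a(\pi v^a)$, which follows from the standard relation $\nabla_a V^a = |g|^{-1/2}\partial_a(|g|^{1/2}V^a)$ applied to $V^a = \prior v^a$; this is precisely the computation already displayed just before the proposition. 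Squaring and integrating against $\prior\epsilon = \pi\,d^p\theta$ then identifies Eq.~(\ref{invariant_P}) with $\avg{\mathsf P}$. Finally, the hypothesis ``$\prior v$ vanishes on $\partial\Theta$'' coincides with the original Gill-Levit hypothesis ``$\pi v$ vanishes on $\partial\Theta$'' because $\sqrt{|g|}>0$, so the boundary term needed to justify the integration by parts underlying Eq.~(\ref{GL}) vanishes.

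The proposition then follows by direct substitution into Eq.~(\ref{GL}). There is no real analytical obstacle in the argument; the point is conceptual rather than computational. The value of the proposition lies in the geometric reinterpretation: $\prior$ becomes a scalar function on the manifold, $\epsilon$ becomes the invariant volume form, and $\prior^{-1}\nabla_a(\prior v^a)$ becomes the natural invariant object that replaces $\pi^{-1}\partial_a(\pi v^a)$. Combined with Proposition~\ref{prop_contra}, which already established that a contravariant $v$ yields an invariant bound, every factor in Eq.~(\ref{GL}) is now expressible through the geometric data $(v,u,F,\prior,\epsilon)$ alone and is therefore manifestly independent of the chosen parametrization.
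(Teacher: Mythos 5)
Your reduction is correct as far as it goes, but it takes a genuinely different route from the paper and does not deliver the full generality the proposition claims. You treat the result as a dictionary translation: $\prior\epsilon = \pi\, d^p\theta$, the divergence identity $\nabla_a V^a = |g|^{-1/2}\partial_a(|g|^{1/2}V^a)$ turns $\prior^{-1}\nabla_a(\prior v^a)$ into $\pi^{-1}\partial_a(\pi v^a)$, and then the already-cited coordinate form of Eqs.~(\ref{GL})--(\ref{P}) is invoked. That is a perfectly sound and more economical argument whenever $\Theta$ is an open subset of $\mathbb R^p$, or more generally is covered by a single global chart (note that your argument needs only a global chart, not flatness: the displayed identity holds for any metric in any one chart, so the curvature discussion is a red herring for your route). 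The paper instead rederives the bound from scratch in intrinsic terms --- defining the bias $\mathsf b$, applying the Leibniz rule for the covariant derivative, killing the boundary term $\int\nabla_a(\mathsf b\prior v^a)\epsilon$ via the Stokes theorem on the manifold, identifying a generalized score $s$, and applying Cauchy--Schwarz to $\expect[(\check\beta-\beta)s]$. What that buys, and what your proof misses, is precisely the case the section is set up for: $\Theta$ has just been promoted to a general $p$-dimensional manifold, which need not admit a global coordinate system (e.g., a circle or sphere of parameters, or a compact manifold without boundary where the hypothesis on $\prior v$ is vacuous). There your chart-by-chart translation either fails outright or imposes spurious vanishing conditions at artificial chart edges, whereas the intrinsic Stokes argument goes through unchanged; this is what the paper means when it says the proposition is ``more general'' than Eqs.~(\ref{GL})--(\ref{P}). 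Two smaller points: your equivalence of the boundary hypotheses ``$\prior v$ vanishes'' and ``$\pi v$ vanishes'' needs $\sqrt{|g|}$ to be bounded away from $0$ and $\infty$ near $\partial\Theta$, not merely positive in the interior; and since you outsource the analytic content to the cited bound, you should at least acknowledge that the same regularity premises (e.g., $\int\partial_a f\, d\mu = 0$) are being inherited rather than verified.
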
 
\begin{proof}
  For completeness, I provide a proof that proceeds in a manifestly
  invariant way, so that the proposition is proved also for a curved
  metric. Define the bias as
\begin{align}
\mathsf b &\equiv \int \bk{\check\beta - \beta} f^{(n)} d\mu^{(n)},
\end{align}
and write, via the Leibniz rule for the covariant derivative,
\begin{align}
\int \nabla_a\bk{\mathsf b \prior v^a}\epsilon
&= \iint \bk{\check\beta-\beta} \nabla_a\bk{f^{(n)}\prior v^a} 
d\mu^{(n)}\epsilon
\nonumber\\
&\quad -\int \bk{v^a\nabla_a\beta} \prior \epsilon.
\label{leibniz}
\end{align}
It can be shown that the left-hand side of Eq.~(\ref{leibniz}) is zero
by applying the Stokes theorem \cite{carroll,lee03} and requiring that
$\prior v$ vanishes on the boundary of $\Theta$ if there is a
boundary.  With $\nabla_a\beta = \partial_a\beta$ when $\nabla_a$ acts
on a scalar, the last term in Eq.~(\ref{leibniz}) is precisely
$\avg{\mathsf A}$ in Eq.~(\ref{invariant_N}). I obtain
\begin{align}
\Avg{\mathsf A} &= 
\iint \bk{\check\beta-\beta} \nabla_a\bk{f^{(n)}\prior v^a} d\mu^{(n)}\epsilon
\\
&= \expect\Bk{\bk{\check\beta-\beta}s},
\end{align}
where $s$ is a generalized score function given by
\begin{align}
s &\equiv \frac{1}{f^{(n)}\prior} \nabla_a\bk{f^{(n)}\prior v^a}
\\
&=\frac{1}{f^{(n)}} v^a\nabla_a  f^{(n)} 
+ \frac{1}{\prior}\nabla_a\bk{\prior v^a}.
\end{align}
The expectation can be regarded as an inner product. The
Cauchy-Schwarz inequality then gives
\begin{align}
\Avg{\mathsf A}^2 &\le 
\expect\Bk{\bk{\check\beta-\beta}^2}
\expect\bk{s^2}.
\label{cauchy}
\end{align}
With the usual premise
\begin{align}
\int \nabla_a f  d\mu
&= \int \partial_a  f(x|\theta) d\mu(x) = 
\partial_a\int f d\mu = 0,
\end{align}
it can be shown that
\begin{align}
\expect\bk{s^2}
= n \Avg{\mathsf F} + \Avg{\mathsf P},
\end{align}
with $\avg{\mathsf F}$ given by Eq.~(\ref{invariant_F}) and
$\avg{\mathsf P}$ given by Eq.~(\ref{invariant_P}).  Hence,
Eq.~(\ref{cauchy}) leads to Eq.~(\ref{GL}), together with
Eqs.~(\ref{invariant_N})--(\ref{invariant_P}).
\end{proof}

The original Gill-Levit bounds given by Eqs.~(\ref{GL})--(\ref{P}) may
be viewed as a special case of Proposition~\ref{prop_invariant} if one
can pick a parametrization (coordinate system) with
$g_{ab}=\delta_{ab}$ everywhere in $\Theta$. If the Riemann curvature
tensor with respect to the metric is zero everywhere, then one can
always find a parametrization for which $g_{ab} = \delta_{ab}$
\cite{carroll}, and the two formulations are equivalent in
essence. But if not, the metric is said to be curved, and
Proposition~\ref{prop_invariant} is more
general. Proposition~\ref{prop_invariant} may also be regarded as a
special case of Theorem~2.1 in Ref.~\cite{jupp10}, although the latter
is so general that the bound there may depend on the estimator.

While it is unclear whether curved metrics are useful for the kind of
problems considered here, one immediate advantage of the invariant
formulation is that all the ensuing results are guaranteed to be
invariant.

\section{\label{sec_optimal}Optimal Gill-Levit bound}

To derive the optimal Gill-Levit bound, it is illuminating to first
recall the concept of least favorable submodels in the local theory,
as outlined in Ref.~\cite{stein56}; see also Ref.~\cite{gross20}. Pick
a curve in the parameter space that passes through the true value and
denote a tangent vector there as $v$. The local bound for the
one-dimensional submodel is given by
\begin{align}
\mathsf C(v)&= \frac{(v^a u_a)^2}{v^a F_{ab} v^b}.
\end{align}
Define an inner product between two vectors as
\begin{align}
\Avg{v,w}_g &\equiv  v^aw_a = v^a g_{ab} w^a,
\end{align}
where the usual convention of index lowering and raising via $g_{ab}$
and its inverse $g^{ab}$ in differential geometry is assumed.  Let $F$ be an operator that obeys
$(F v)_a = F_{ab}v^b$. If $F$ is positive-definite, $F^{-1}$ and the
square roots $F^{1/2}$ and $F^{-1/2}$ exist \cite{horn}.  The
Cauchy-Schwarz inequality gives
\begin{align}
\mathsf C(v) &= \frac{\avg{v,u}_g^2}{\avg{v,Fv}_g}
= \frac{\avg{F^{1/2}v,F^{-1/2}u}_g^2}{\avg{v,Fv}_g} \\
&\le \Avg{u,F^{-1}u}_g = u_a\bk{F^{-1}}^{ab}u_b,
\end{align}
which coincides with Eq.~(\ref{CRB}) for the full model. A least
favorable tangent vector that attains the equality must satisfy
\begin{align}
v^a &\propto \bk{F^{-1}}^{ab} u_b.
\label{least_local}
\end{align}
Thus, Eq.~(\ref{CRB}) can be evaluated by considering the tangent
space at the true parameter and picking the worst direction.

For the Gill-Levit bounds, the ``natural'' choice of $v$ given by
Eq.~(\ref{natural}) is a least favorable choice in the local
theory. Thus, one may intuit that $v$ plays an analogous role of
picking out directions in the Bayesian bound, except that $v$ should
now be considered as a vector field, as depicted in
Fig.~\ref{vector_field}.  In differential geometry, a vector field can
generate a family of integral curves, called a flow, in the manifold,
and vice versa \cite{lee03}. In the context of statistics, each curve
corresponds to a one-dimensional submodel, so the concept of locally
least favorable submodels may be generalized to a concept of least
favorable flows.  Following this intuition, I can generalize the
strategy of optimizing over $v$ to obtain the tightest bound, as
follows.

\begin{figure}[htbp!]
\centerline{\includegraphics[width=0.45\textwidth]{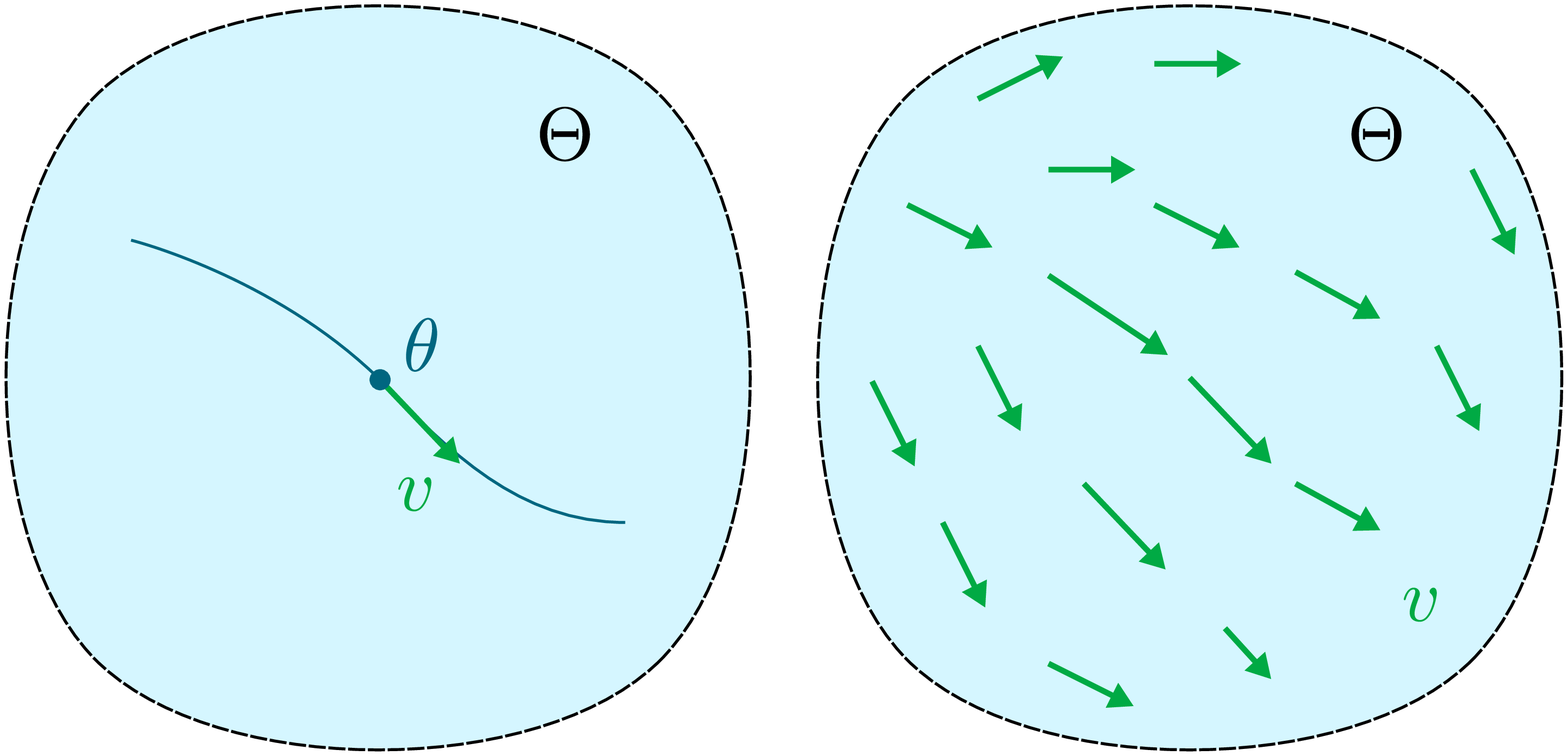}}
\caption{\label{vector_field}Left: a geometric picture of a
  one-dimensional submodel as a curve in the manifold and a tangent
  vector $v$ at the true parameter value $\theta$ in the local theory.
  Right: a picture of $v$ as a vector field in the Bayesian theory.}
\end{figure}

\begin{theorem}[Optimal Gill-Levit bound]
\label{thm_optimal}
\begin{align}
\max_v \mathsf B = \Avg{u,L^{-1}u}_\prior \equiv \mathsf B_{\rm max},
\label{Bmax}
\end{align}
where the inner product between two vector fields is defined as
\begin{align}
\Avg{v,u}_\prior &\equiv \int v^au_a \prior \epsilon,
\label{inner_rho}
\end{align}
the linear, self-adjoint, and positive-semidefinite operator $L$ is
defined as
\begin{align}
(Lv)_a &\equiv
 n F_{ab} v^b -\nabla_a\Bk{\frac{1}{\prior}\nabla_b\bk{\prior v^b}},
\label{L}
\end{align}
and $u$ is assumed to be in the range of $L$, such that $L^{-1}u$
exists. A least favorable vector field, defined as a $v$ that
maximizes $\mathsf B$, must satisfy
\begin{align}
v &\propto L^{-1} u.
\label{least_global}
\end{align}
\end{theorem}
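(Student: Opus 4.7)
The plan is to recast $\mathsf B$ as a generalized Rayleigh quotient on the space of vector fields endowed with the inner product $\Avg{\cdot,\cdot}_\prior$ of Eq.~(\ref{inner_rho}), and then to maximize it by a Cauchy--Schwarz argument analogous to the one used for the local bound at the start of the section. The first observation is that $\Avg{\mathsf A}^2$ and $n\Avg{\mathsf F}+\Avg{\mathsf P}$ are each homogeneous of degree two in $v$, so $\mathsf B$ is invariant under $v\mapsto\lambda v$ and the optimization reduces to maximizing $\Avg{v,u}_\prior^2$ subject to a normalization of $\Avg{v,Lv}_\prior$.

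Next I would rewrite numerator and denominator as inner products in $v$. The identification $\Avg{\mathsf A}=\Avg{v,u}_\prior$ is immediate from Eq.~(\ref{invariant_N}). The main technical step is to show that $n\Avg{\mathsf F}+\Avg{\mathsf P}=\Avg{v,Lv}_\prior$ with $L$ defined by Eq.~(\ref{L}). The Fisher piece is immediate. For the prior-information piece I would set $\phi\equiv(1/\prior)\nabla_a(\prior v^a)$, so that $\Avg{\mathsf P}=\int\phi\,\nabla_a(\prior v^a)\,\epsilon$, and then integrate by parts via the covariant Stokes theorem to transfer the divergence onto $\phi$, obtaining $\int v^a\bk{-\nabla_a\phi}\prior\epsilon$. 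The boundary contribution vanishes by the standing assumption that $\prior v$ is zero on $\partial\Theta$. The same manipulation exhibits $L$ as self-adjoint with respect to $\Avg{\cdot,\cdot}_\prior$ and, as the sum of two manifestly nonnegative quadratic forms, positive-semidefinite.

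With $\mathsf B=\Avg{v,u}_\prior^2/\Avg{v,Lv}_\prior$ the result becomes a classical Rayleigh--Cauchy--Schwarz computation. I would apply the Cauchy--Schwarz inequality in the form $\Avg{L^{1/2}v,L^{-1/2}u}_\prior^2\le\Avg{v,Lv}_\prior\Avg{u,L^{-1}u}_\prior$, which is legitimate because the hypothesis $u\in\range(L)$ ensures that $L^{-1/2}u$ is well defined. Rearranging gives $\mathsf B\le\Avg{u,L^{-1}u}_\prior=\mathsf B_{\rm max}$, with equality iff $L^{1/2}v\propto L^{-1/2}u$, i.e.\ $v\propto L^{-1}u$, which is Eq.~(\ref{least_global}); substituting this choice back into $\mathsf B$ confirms that the supremum is attained and equals Eq.~(\ref{Bmax}).

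The main obstacle I expect is making the covariant integration by parts fully rigorous in the manifold setting---verifying that the one-form being integrated has enough regularity and that the hypothesis $\prior v|_{\partial\Theta}=0$ is actually what is needed to kill the Stokes boundary term on possibly noncompact or boundaryless $\Theta$---and handling potential non-invertibility of $L$. If $L$ has a nontrivial kernel, the statement should be read as a supremum, and $v\propto L^{-1}u$ is determined only modulo $\ker L$; the value $\Avg{u,L^{-1}u}_\prior$ nevertheless remains well defined precisely under the stated assumption that $u$ lies in the range of $L$.
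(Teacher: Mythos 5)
Your proposal is correct and follows essentially the same route as the paper's proof: rewriting $\Avg{\mathsf A}$ and $n\Avg{\mathsf F}+\Avg{\mathsf P}$ as inner products $\Avg{v,u}_\prior$ and $\Avg{v,Lv}_\prior$ via covariant integration by parts (with the boundary term killed by the vanishing of $\prior v$), then applying the Cauchy--Schwarz inequality in the form $\avg{L^{1/2}v,L^{-1/2}u}_\prior^2\le\Avg{v,Lv}_\prior\Avg{u,L^{-1}u}_\prior$ with equality iff $v\propto L^{-1}u$. Your closing remarks on $\ker L$ and the well-definedness of $\Avg{u,L^{-1}u}_\prior$ under the hypothesis $u\in\range(L)$ are a minor but sound addition beyond what the paper states explicitly.
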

\begin{proof}
  In terms of the inner product given by Eq.~(\ref{inner_rho}),
  Eqs.~(\ref{invariant_N})--(\ref{invariant_P}) can be expressed as
\begin{align}
\Avg{\mathsf A} &= \Avg{v,u}_\prior,
\\
\Avg{\mathsf F} &= \Avg{v,Fv}_\prior,
\\
\Avg{\mathsf P} &= 
\int \Bk{\nabla_a\bk{\prior v^a}}
 \frac{1}{\prior}\nabla_b\bk{\prior v^b} \epsilon
\\
&= 
-\int \prior v^a \nabla_a\Bk{\frac{1}{\prior}\nabla_b\bk{\prior v^b}} \epsilon
\label{parts}
\\
&= \Avg{v,Pv}_\prior,
\\
(Pv)_a &\equiv  
-\nabla_a\Bk{\frac{1}{\prior}\nabla_b\bk{\prior v^b}},
\end{align}
where Eq.~(\ref{parts}) comes from integration by parts, as enabled by
the Leibniz rule and the Stokes theorem, and the assumption that
$\prior v$ vanishes on any boundary of $\Theta$. One can check that
$F$ and $P$ are linear, self-adjoint, and positive-semidefinite
operators. Furthermore,
\begin{align}
n \Avg{\mathsf F} + \Avg{\mathsf P} &= \Avg{v,Lv}_\prior,
&
L &= n F + P.
\end{align}
As $L^{-1}u$ is assumed to exist, the Cauchy-Schwarz inequality
yields
\begin{align}
\mathsf B &= \frac{\Avg{v,u}_\prior^2}{\Avg{v,Lv}_\prior}
= \frac{\avg{L^{1/2} v,L^{-1/2} u}_\prior^2}{\avg{v,Lv}_\prior}
\le \Avg{u,L^{-1}u}_\prior,
\end{align}
and the equality is attained if and only if $v$ obeys
Eq.~(\ref{least_global}).
\end{proof}

Within the Gill-Levit family, $\mathsf B_{\rm max}$ is not only the
maximum but also the closest in spirit to the local bound given by
Eq.~(\ref{CRB}), with the $L^{-1}$ operator generalizing the role of
$F^{-1}$. Moreover, note that $\mathsf B_{\rm max}$ is naturally
invariant. Although it is also possible to derive
$\mathsf B_{\rm max}$ starting from Eqs.~(\ref{GL})--(\ref{P}) without
the invariant formalism, at least for a flat metric, the invariance of
$\mathsf B_{\rm max}$ would have been much more tedious to prove, with
a proliferation of Jacobians.

The most difficult part of computing $\mathsf B_{\rm max}$ is solving
for $L^{-1}u$. Let $v = L^{-1}u$, which is a least favorable field. It
obeys the second-order field equation
\begin{align}
\bk{Lv}_a &= nF_{ab} v^b -\nabla_a\Bk{\frac{1}{\prior}\nabla_b\bk{\prior v^b}} 
= u_a.
\label{least_explicit}
\end{align}
The solution, expressible in terms of an impulse-response (Green)
function, can be substituted into Eq.~(\ref{Bmax}) to give
$\mathsf B_{\rm max}$. For large $n$, Eq.~(\ref{least_explicit}) can
be simplified to
\begin{align}
n F_{ab} v^b &\approx u_a,
&
\mathsf B_{\rm max} &\approx \frac{\avg{\mathsf C}}{n},
\end{align}
so Eq.~(\ref{least_local}) is asymptotically least favorable to the
Gill-Levit family, in nice agreement with the local theory
\cite{vaart} and earlier results \cite{abushanab15,*koike20}. Note,
however, that the exact optimal choice according to
Eq.~(\ref{least_explicit}) also depends on the prior and some
derivatives. The correction to the local theory becomes especially
important if $u$ is not in the range of $F$ and
Eq.~(\ref{least_local}) has no solution. The question of what to do
when $u$ is not even in the range of $L$, and
Eq.~(\ref{least_explicit}) has no solution, remains open.

Another special case is when a parametrization with
$g_{ab} = \delta_{ab}$ is assumed, $u$ and $F$ are
$\theta$-independent, and $\pi$ is Gaussian with covariance matrix
$G^{-1}$. Then the solution to Eq.~(\ref{least_explicit}) is
\begin{align}
v^a &= \Bk{\bk{nF + G}^{-1}}^{ab}u_b,
\end{align}
and $\mathsf B_{\rm max}$ becomes
\begin{align}
\mathsf B_{\rm max} &= 
u_a \Bk{\bk{nF + G}^{-1}}^{ab}u_b,
\label{Bmax_gauss}
\end{align}
which coincides with the Sch\"utzenberger-Van Trees version given by
Eq.~(\ref{B_original}), since $\avg{u} = u$, $\avg{F} = F$, and
$\avg{G} = G$ in this case. Furthermore, if
  $f(x|\theta) = f(x-\theta)$ and $f(x-\theta)$ is also Gaussian, such
  that $F$ is the inverse of the covariance matrix of $f$, then it is
  well known that the minimum Bayes risk
  $\min_{\check\beta}\avg{\mathsf R}$ is also given by the right-hand
  side of Eq.~(\ref{Bmax_gauss}) \cite{vantrees}, and
  $\mathsf B_{\rm max}$ is a tight bound.

\section{\label{sec_wave}Wave picture}
I now switch gears and make the substitution
\begin{align}
\prior &= \psi^2,
\end{align}
where $\psi$ is a real function of the parameter. I call $\psi$ a
wavefunction. All the functionals in
Eqs.~(\ref{invariant_N})--(\ref{invariant_P}) turn out to be quadratic
with respect to $\psi$ and $\nabla_a\psi$, given by
\begin{align}
\Avg{\mathsf A} &= \int \bk{v^a u_a}\psi^2  \epsilon,
\label{A_wave}
\\
\Avg{\mathsf F} &= \int  \bk{v^a F_{ab} v^b}\psi^2\epsilon,
\\
\Avg{\mathsf P} &= \int \bk{\mathsf D\psi}^2\epsilon,
\\
\mathsf D\psi &\equiv \bk{\nabla_a v^a}\psi + 2 v^a\nabla_a\psi.
\label{D}
\end{align}
The problem of choosing an unfavorable prior to tighten the bound for
minimax estimation now becomes a problem of finding the wavefunction
that maximizes $\mathsf B$. To simplify, I define yet another inner
product as
\begin{align}
\Avg{\psi,\phi} &\equiv \int \psi \phi\epsilon.
\end{align}
The normalization condition for the prior density becomes
\begin{align}
\int \prior \epsilon = \Avg{\psi,\psi} = 1.
\label{normalization}
\end{align}
It can be shown that
\begin{align}
\Avg{\mathsf A} &= \Avg{\psi,\mathsf A\psi},
\\
\Avg{\mathsf F} &= \Avg{\psi,\mathsf F\psi},
\\
\Avg{\mathsf P} &= \Avg{\mathsf D\psi,\mathsf D\psi} 
= \Avg{\psi,\mathsf D^\dagger \mathsf D\psi},
\label{P_wave2}
\\
\mathsf D^\dagger\psi &= \bk{\nabla_a v^a}\psi - 2 v^a\nabla_a\psi,
\label{D_dagger}
\\
\mathsf B &= \frac{\Avg{\psi,\mathsf A\psi}^2}{\Avg{\psi,\mathsf H\psi}},
\\
\mathsf H &\equiv n \mathsf F + \mathsf D^\dagger \mathsf D.
\label{H}
\end{align}
Note that $\mathsf D$ may be a nonlinear operator, if the choice of
$v$, such as Eq.~(\ref{least_explicit}), depends on the prior.  To
proceed, I assume that $v$ does not depend on $\psi$ and $\mathsf D$
is linear.  Then I can follow the approach in Sec.~\ref{sec_optimal}
to obtain
\begin{align}
\mathsf B &= 
\frac{\Avg{\mathsf H^{1/2}\psi, \mathsf H^{-1/2}\mathsf A\psi}^2}
{\Avg{\psi,\mathsf H\psi}}
\le \Avg{\mathsf A\psi, \mathsf H^{-1}\mathsf A\psi}.
\end{align}
The equality is attained if and only if
\begin{align}
\mathsf H \psi &= \bk{n \mathsf F + \mathsf D^\dagger \mathsf D}\psi 
= \lambda\mathsf A\psi,
\label{wave}
\end{align}
where $\lambda$ is an arbitrary nonzero real number.  Let
$\psi_\lambda$ be a solution of Eq.~(\ref{wave}) as a function of
$\lambda$, subject to the normalization constraint given by
Eq.~(\ref{normalization}). Then
\begin{align}
\mathsf B &= 
\frac{1}{\lambda}\Avg{\psi_\lambda,\mathsf A\psi_\lambda},
\end{align}
and this expression should be maximized with respect to $\lambda$ to
obtain the tightest lower bound on $\sup_\theta \mathsf R(\theta)$.

A substantial simplification can be made if a parametrization with
$g_{ab}=\delta_{ab}$ can be assumed and $u$, $v$, and therefore
$\mathsf A$ are $\theta$-independent. Equation~(\ref{wave}) becomes
\begin{align}
\Bk{n\mathsf F(\theta) -4\bk{v^a\partial_a}^2}\psi(\theta) 
&= \lambda \mathsf A \psi(\theta),
\label{schrodinger}
\end{align}
which is a time-independent Schr\"odinger equation. The Fisher
information $\mathsf F = v^aF_{ab} v^b$, evaluated in the direction of
$v$, plays the role of the potential, while $-(v^a\partial_a)^2$, in
terms of the directional derivative $v^a\partial_a$, plays the role of
the kinetic-energy operator. The bound becomes
\begin{align}
\mathsf B &= \frac{\mathsf A^2}{\Avg{\psi,\mathsf H\psi}}.
\label{B_average_energy}
\end{align}
To maximize $\mathsf B$, one should therefore solve for
\begin{align}
\mathsf B_{\rm worst} &\equiv \sup_{\psi:\avg{\psi,\psi}=1}\mathsf B
= \frac{\mathsf A^2}{\mathsf E_{\rm min}},
\label{Bworst}
\\
\mathsf E_{\rm min} &\equiv 
\inf_{\psi: \avg{\psi,\psi} = 1}\Avg{\psi,\mathsf H\psi},
\label{Emin}
\end{align}
that is, the ground-state energy.  The infimum is used here in
  case a normalizable ground state does not exist.  Adding a phase to
the wavefunction cannot reduce the energy, so the consideration of
only real wavefunctions is justified here.

The wave correspondence makes sense, as intuition suggests that an
unfavorable prior should be concentrated near the minimum of the
Fisher information, just as the ground state should be concentrated
near the bottom of the potential. If the prior density is made too
sharp, however, the prior information $\avg{\mathsf P}$ would become
large, and therefore a balance between $n\avg{\mathsf F}$ and
$\avg{\mathsf P}$ should be struck to minimize their sum, just as the
ground state achieves the optimal balance between the potential and
kinetic energies.

In the limit $n \to \infty$, the ground-state energy is the
classical-mechanics limit given by
\begin{align}
\mathsf E_{\rm min} &= n \inf_{\theta\in\Theta} \mathsf F(\theta) + o(n),
\end{align}
where $o(g(n))$ denotes a term in a smaller order than $g(n)$ as
$n\to\infty$. Other asymptotic notations \cite{knuth76},
  including $\mathit\Theta(g(n))$ (same order as $g(n)$) and
  $\mathit\Omega(g(n))$ (order at least as large as $g(n)$), will also be
  used in the following. If the infimum of $\mathsf F(\theta)$ is
strictly positive, $\mathsf B_{\rm worst}$ obeys the parametric rate
$\mathit\Theta(n^{-1})$. A more interesting case is when the infimum
is zero, $\mathsf E_{\rm min} = o(n)$, and the bound mandates a
convergence rate slower than the parametric rate. A concrete special
case is as follows.
\begin{theorem}
\label{thm_rate}
  Suppose that $u$ and $v$ are $\theta$-independent and obey
  $v^a u_a \neq 0$. Suppose also that there exists a one-dimensional
  submodel with parametrization
\begin{align}
\theta^a(\tau) = \theta^a(0) + v^a\tau,
\end{align}
$\tau \in (\tau_1,\tau_2) \subseteq \mathbb R$,
$\tau_1 \le 0 \le \tau_2$, $\tau_1\neq\tau_2$, and Fisher information
bounded by
\begin{align}
\mathsf F(\tau) &= v^a F_{ab}\bk{\theta(\tau)}v^b
\le A \abs{\tau}^m,
\end{align}
where $A$ is a positive constant and $m \ge 0$. Then
\begin{align}
\sup_{\theta\in\Theta}\mathsf R(\theta)
&\ge \mathsf B_{\rm worst} = \mathit\Omega\bk{n^{-2/(m+2)}}.
\end{align}
\end{theorem}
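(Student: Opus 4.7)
The plan is to apply the wave-picture identity $\mathsf B_{\rm worst} = \mathsf A^2/\mathsf E_{\rm min}$ from Eqs.~(\ref{Bworst})--(\ref{Emin}). Under the stated hypotheses, $\mathsf A = v^a u_a$ is a nonzero constant, so the asymptotics of $\mathsf B_{\rm worst}$ are governed entirely by those of the ground-state energy $\mathsf E_{\rm min}$ of $\mathsf H = n\mathsf F - 4(v^a\partial_a)^2$. It therefore suffices to establish $\mathsf E_{\rm min} = O(n^{2/(m+2)})$, and by the Rayleigh--Ritz characterization (\ref{Emin}) this will follow if I can exhibit a single normalized trial wavefunction $\psi_n$ with $\langle\psi_n,\mathsf H\psi_n\rangle = O(n^{2/(m+2)})$.

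To construct $\psi_n$ I would exploit two structural features. First, the kinetic operator $-4(v^a\partial_a)^2$ differentiates only along the direction of $v$; introducing coordinates $(\tau, s_\perp)$ adapted to the submodel---with $\tau$ parametrizing the line and $s_\perp$ the orthogonal complement---$\mathsf H$ carries no transverse derivative. Second, the hypothesis controls $\mathsf F$ exactly on the submodel $\{s_\perp = 0\}$. I would therefore take the factorized ansatz $\psi_n(\theta) = \phi_\ell(\tau)\,\xi_r(s_\perp)$ with $\xi_r$ an $L^2$-normalized transverse bump of width $r$. Because $\mathsf H$ has no $\partial_{s_\perp}$, shrinking $r\to 0$ costs nothing kinetically, and in that limit the potential term collapses to a one-dimensional integral along the submodel where the bound $\mathsf F(\tau)\le A|\tau|^m$ applies directly (for $p=1$ this step is vacuous).

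It remains to solve the reduced one-dimensional variational problem for $\phi_\ell$ supported in $(\tau_1,\tau_2)$. I would use a dilation ansatz $\phi_\ell(\tau) = \phi_0(\tau/\ell)/\sqrt{\ell}$, where $\phi_0 \in C_c^\infty$ is a fixed normalized profile supported in whichever subinterval of $(-1,1)$ is compatible with the hypotheses $\tau_1 \le 0 \le \tau_2$, $\tau_1\neq\tau_2$ (so $\phi_0$ is supported in $(0,1)$ if $\tau_1=0$, in $(-1,0)$ if $\tau_2=0$, and in $(-1,1)$ otherwise). A routine change of variables then yields a kinetic contribution $C_1/\ell^2$ and, using $\mathsf F(\tau)\le A|\tau|^m$, a potential contribution at most $C_2\, n\, \ell^m$, with $C_1, C_2 > 0$ depending only on $\phi_0$, $v$, $m$, and $A$. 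Minimizing $C_1/\ell^2 + C_2 n \ell^m$ over $\ell>0$ gives the optimal scale $\ell_\star \propto n^{-1/(m+2)}$ and minimum value $\mathit\Theta(n^{2/(m+2)})$; for all sufficiently large $n$, $\ell_\star$ is small enough for $\phi_{\ell_\star}$ to have support inside $(\tau_1,\tau_2)$, so the ansatz is admissible. Assembling the pieces gives $\mathsf E_{\rm min} = O(n^{2/(m+2)})$ and hence $\mathsf B_{\rm worst} = \mathit\Omega(n^{-2/(m+2)})$.

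The main obstacle is the $p>1$ case: one must legitimize the transverse limit $r\to 0$ without any hypothesis on $\mathsf F$ off the submodel. The natural resolution is to keep $\xi_r$ smooth and $L^2$-normalized, use the fact that the kinetic quadratic form is independent of $r$ (since $\mathsf H$ has no transverse derivative), and pass to the limit in the potential term by weak convergence of $\xi_r^2$ to a transverse delta, invoking only measurability and local integrability of $\mathsf F$ in a neighbourhood of the submodel. The minor bookkeeping for the endpoint degeneracies $\tau_1=0$ and $\tau_2=0$ is routine and affects only the choice of support for $\phi_0$, not the rate.
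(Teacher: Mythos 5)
Your core argument coincides with the paper's: a Rayleigh--Ritz upper bound on $\mathsf E_{\rm min}$ from a dilated, compactly supported trial function $\phi(\tau/W)/\sqrt{W}$, giving a kinetic cost $\mathit\Theta(W^{-2})$ and a potential cost $O(nW^m)$, optimized at $W \propto n^{-1/(m+2)}$ to yield $\mathsf E_{\rm min} = O(n^{2/(m+2)})$ and hence $\mathsf B_{\rm worst} = \mathsf A^2/\mathsf E_{\rm min} = \mathit\Omega(n^{-2/(m+2)})$; your support bookkeeping for $\tau_1 \le 0 \le \tau_2$ (including the one-sided cases) matches the paper's observation that the rescaled function stays inside $(\tau_1,\tau_2)$ for $0<W\le 1$. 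Where you diverge is the treatment of $p>1$, and there your proposed justification has a gap. The paper never builds a full-dimensional prior on $\Theta$: it applies the Gill--Levit/wave machinery directly to the one-dimensional submodel $\theta(\tau)=\theta(0)+v\tau$ regarded as a statistical model with scalar parameter $\tau$, directional derivative $v^a\partial_a = \partial/\partial\tau$, information $\mathsf F(\tau)$, and $d\beta/d\tau = v^a u_a = \mathsf A \neq 0$; since Eq.~(\ref{minimax}) holds for \emph{any} prior, including one supported on the curve, only the on-submodel information ever enters and no transverse limit is required. Your alternative --- a factorized ansatz $\phi_\ell(\tau)\,\xi_r(s_\perp)$ with $r\to 0$ --- needs the limit $\int \mathsf F(\tau,s_\perp)\,\xi_r(s_\perp)^2\,ds_\perp \to \mathsf F(\tau,0)$, which does not follow from mere measurability and local integrability of $\mathsf F$ off the submodel: an approximate identity recovers the value on the slice only under (approximate) continuity transverse to the submodel plus a domination hypothesis, and without these the potential term could fail to converge or even diverge along the family. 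The repair is not additional regularity but simply dropping the transverse regularization and restricting the prior to the submodel from the outset, as the paper does; with that change your argument is the paper's argument.
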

\begin{proof}
With the given conditions, the average energy for the submodel is
\begin{align}
\Avg{\psi,\mathsf H\psi}
=  \int_{\tau_1}^{\tau_2} 
\BK{n\mathsf F(\tau) \psi(\tau)^2 + \Bk{\parti{\psi(\tau)}{\tau}}^2} d\tau.
\end{align}
Let $\psi(\tau) = \phi(\tau/W)/\sqrt{W}$, where $\phi$ is a trial
function and $0 < W \le 1$ scales the width of $\psi$. Then
\begin{align}
\Avg{\psi,\mathsf H\psi}
&\le n A W^m \int_{\tau_1/W}^{\tau_2/W}
\phi(y)^2 \abs{y}^m dy
\nonumber\\&\quad
+ \frac{4}{W^2}\int_{\tau_1/W}^{\tau_2/W} \Bk{\parti{\phi(y)}{y}}^2 dy
\\
&= 
n A W^m \int_{\tau_1}^{\tau_2}
\phi(y)^2 \abs{y}^m dy
\nonumber\\&\quad
+ \frac{4}{W^2}\int_{\tau_1}^{\tau_2} \Bk{\parti{\phi(y)}{y}}^2 dy,
\label{variational}
\end{align}
where the last step uses the fact that $\tau_1/W \le \tau_1$ and
$\tau_2/W \ge \tau_2$, since $\tau_1 \le 0 \le \tau_2$ and
$0 < W \le 1$, and $\phi(y)$ vanishes outside $(\tau_1,\tau_2)$.  It
is not difficult to show that, regardless of $\tau_1$ and $\tau_2$,
there always exists a trial function that makes both integrals in
Eq.~(\ref{variational}) converge. Minimizing Eq.~(\ref{variational})
with respect to $W$, I obtain
\begin{align}
  W &=  A_1 n^{-1/(m+2)},
\\
  \Avg{\psi,\mathsf H\psi} &\le A_2 n^{2/(m+2)},
\label{average_H}
\end{align}
where $A_1$ and $A_2$ are positive constants.  For a large enough $n$,
the assumption $W \le 1$ can be satisfied. The theorem then follows from
Eqs.~(\ref{GL}), (\ref{minimax}), (\ref{Bworst}), (\ref{Emin}), and
(\ref{average_H}).
\end{proof}

A concrete example is $\mathsf F(\tau) \le A\tau^2$, in which
  case we can borrow from the theory of quantum harmonic oscillators
  to find that the ground-state energy for a potential $n A \tau^2$ is
  $\mathit\Theta(n^{1/2})$, leading to
  $\mathsf B_{\rm worst} = \mathit\Omega(n^{-1/2})$.

\section{\label{sec_quantum}Quantum estimation theory}
\subsection{Basics}
Assume $n = 1$ without loss of generality. Let
$\{\dop(\theta):\theta \in \Theta\}$ be a family of density operators
that model a quantum system. The generalized Born's rule states that
the statistics of any measurement of the system can be modeled by a
positive operator-valued measure (POVM) $E$ \cite{hayashi} via
\begin{align}
f(x|\theta) d\mu(x) &= \trace\Bk{dE(x) \dop(\theta)},
\end{align}
where $\trace$ denotes the operator trace. For any POVM, an upper
bound on the Fisher information is given by \cite{young75,*nagaoka87,hayashi}
\begin{align}
\mathsf F &= v^a F_{ab} v^b \le v^a K_{ab} v^b \equiv \mathsf K
\label{nagaoka}
\end{align}
for any vector $v$, where $K$ is the Helstrom information matrix
\cite{helstrom} defined as
\begin{align}
K_{ab}(\theta) &\equiv \trace\Bk{ \dop(\theta) 
{\mathcal S}_a(\theta) \circ {\mathcal S}_b(\theta)},
\end{align}
$A\circ B \equiv (AB+BA)/2$ denotes the Jordan product, and
${\mathcal S}_a$, a score operator, is a solution to
\begin{align}
\partial_a \dop(\theta) &= \dop(\theta) \circ 
{\mathcal S}_a(\theta).
\end{align}
There exist other quantum versions of the Fisher information and the
Cram\'er-Rao bound that are of interest when $\beta$ is vectoral
\cite{hayashi,gill_guta,demkowicz20,suzuki20}, but they are
  outside the scope of this work, and I focus on the Helstrom
information hereafter.

With Eq.~(\ref{nagaoka}), a quantum lower bound on $\mathsf B$ for any
POVM can be obtained simply by replacing $F$ with $K$. To be explicit,
\begin{align}
\Avg{\mathsf R} \ge
\mathsf B \ge \mathsf Q \equiv \frac{\avg{\mathsf A}^2}
{\avg{\mathsf K} + \avg{\mathsf P}}.
\end{align}
For $\mathsf B$ to attain $\mathsf Q$, the equality in
Eq.~(\ref{nagaoka}) must hold for all $\theta \in \Theta$, and that is
usually not possible.

As $K$ is also a positive-semidefinite $(0,2)$ tensor, all the results
in the previous sections apply to the quantum bound as well. In
particular, following Theorem~\ref{thm_optimal}, the optimal
$\mathsf Q$ is
\begin{align}
\mathsf Q_{\rm max} &\equiv \max_v \mathsf Q
= \Avg{u,R^{-1}u}_\prior,
\label{Qmax}
\\
\bk{R v}_a &\equiv K_{ab} v^b - 
\nabla_a\Bk{\frac{1}{\prior}\nabla_b\bk{\prior v^b}}.
\label{R}
\end{align}
It is not difficult to prove that
\begin{align}
\mathsf Q_{\rm max} &\le \mathsf B_{\rm max}
\end{align}
for any POVM.

A simple example is the quantum Gaussian shift model, where
$\dop(\theta)$ is the quantum state of $m$ harmonic oscillators with a
Gaussian Wigner representation and $\theta \in \mathbb R^p$, with
$p = 2m$, is its displacement in phase space
\cite{holevo11,demkowicz20}. Assuming the standard parametrization
with $g_{ab} = \delta_{ab}$, $K$ is the inverse of the covariance
matrix of the Wigner function and $\theta$-independent. By measuring
the object together with an auxiliary in a Gaussian state with the
same covariance matrix, it is possible to produce classical Gaussian
shift statistics that achieves $F = K/2$ \cite{albarelli20}. If $\pi$
is also Gaussian with covariance matrix $G^{-1}$ and $u$ is
$\theta$-independent, then, by the same rationale that gives
Eq.~(\ref{Bmax_gauss}), it can be shown that
\begin{align}
\mathsf Q_{\rm max} &= u^\top \bk{K+G}^{-1} u,
\end{align}
and for the measurement just mentioned,
\begin{align}
\min_{\check\beta}\Avg{\mathsf R} &= \mathsf B_{\rm max}
= u^\top \bk{K/2+G}^{-1} u,
\\
\mathsf Q_{\rm max} &\le 
\min_{\check\beta}\Avg{\mathsf R} \le 2\mathsf Q_{\rm max}.
\end{align}
A further optimization of the measurement for a given $u$ may be
possible, but the optimization problem becomes more difficult in
general, especially when $u$ is $\theta$-dependent or $\pi$ is
non-Gaussian.

\subsection{\label{sec_waveform}Waveform estimation}
Consider a quantum dynamical system, such as the optomechanical force
sensor depicted in Fig.~\ref{optomech}, under the influence of a
classical waveform $\theta(t)$. Using the principles of larger Hilbert
space and deferred measurements \cite{nielsen}, the statistics of a
sequentially measured quantum system can be modeled by a POVM at the
final time and a density-operator family given by
\begin{align}
\dop(\theta) &= U(\theta) \ket{\Psi}\bra{\Psi} U(\theta)^\dagger,
\\
U(\theta) &= \mathcal T\exp\BK{\frac{1}{i\hbar} \int_{-T/2}^{T/2}
\Bk{H_0(t)-q \theta(t)} dt},
\end{align}
where $\ket{\Psi}$ is the initial state of the quantum system, $q$ is
a position operator, $H_0(t)$ is the rest of the Hamiltonian, $T$ is
the total observation time, and $\mathcal T$ denotes time ordering of
the operator exponential.

\begin{figure}[htbp!]
\centerline{\includegraphics[width=0.48\textwidth]{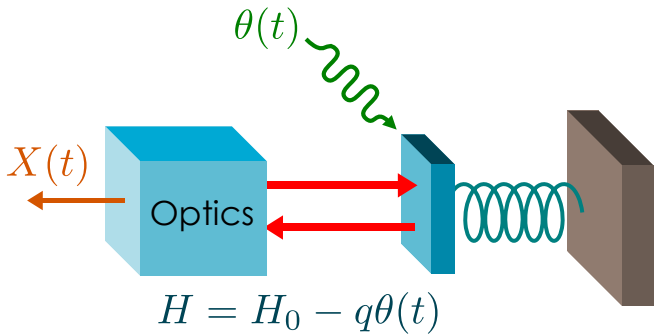}}
\caption{\label{optomech}An optomechanical force sensor under
  continuous optical measurements. $\theta(t)$ is the unknown
  classical force, $H$ is the system Hamiltonian, and $X(t)$ is the
  observation process.}
\end{figure}

Let the parameter of interest be defined in
terms of a weight function $h(t)$ as
\begin{align}
\beta &= \int_{-T/2}^{T/2} h(t)\theta(t) dt.
\end{align}
For example, if $\beta = \theta(\tau)$ at an instant of time $\tau$ is
of interest, then $h(t) = \delta(t-\tau)$. To derive analytic
results, I follow Ref.~\cite{twc} and discretize time as
\begin{align}
t_a &= -\frac{T}{2}+a\delta t,
&
T &= p \delta t.
\end{align}
Assuming 
\begin{align}
\theta(t_a) &= \theta^a, & h(t_a) &= h_a,
\end{align}
and
\begin{align}
\beta &\approx h_a\theta^a\delta t,
\\
U \approx  U(t_p,t_1)&\equiv \exp\Bk{\frac{H_0(t_p) \delta t}{i\hbar}}
\exp\bk{\frac{iq\theta^p\delta t}{\hbar}}\dots
\nonumber\\&\quad
\exp\Bk{\frac{H_0(t_1) \delta t}{i\hbar}}
\exp\bk{\frac{iq\theta^1\delta t}{\hbar}},
\end{align}
it can be shown that
\begin{align}
u_a &\approx 
\partial_a\bk{h_a\theta^a\delta t} =  h_a \delta t,
\label{ha}
\\
K_{ab} &\approx \frac{4\delta t^2}{\hbar^2}C_q(t_a,t_b),
\label{K_waveform}
\\
C_q(t_a,t_b)
&\equiv 
\bra{\Psi}\hat q(t_a)\circ \hat q(t_b)\ket{\Psi}
\nonumber\\&\quad
-\bra{\Psi}\hat q(t_a)\ket{\Psi}\bra{\Psi}\hat q(t_b)\ket{\Psi},
\end{align}
where 
\begin{align}
\hat q(t_a) &\equiv U(t_{a-1},t_1)^\dagger q U(t_{a-1},t_1)
\end{align}
is the Heisenberg picture of $q$, $C_q$ is its covariance function,
and the right-hand side of Eq.~(\ref{K_waveform}) is the exact
  Helstrom information for
  $\dop(\theta) =
  U(t_p,t_1)\ket{\Psi}\bra{\Psi}U(t_p,t_1)^\dagger$. If $\hat q(t)$
is stationary, the covariance can be written in terms of a power
spectral density $S_q(\omega)$ as \cite{braginsky}
\begin{align}
C_q(t_a,t_b) &= 
\intall S_q(\omega) \exp\Bk{i\omega(t_b-t_a)}
\frac{d\omega}{2\pi}.
\end{align}
With the assumption of stationary processes and long observation time
(SPLOT) \cite{vantrees}, $K$ can be approximated as a circulant matrix
\cite{gray06} and expressed as
\begin{align}
K_{ab} &\approx 
\frac{\delta t}{p}\sum_{j=0}^{p-1} \frac{4S_q(\omega_j)}{\hbar^2}
\exp\Bk{i\omega_j(t_b-t_a)},
\label{K_stationary}
\end{align}
where $\omega_j = \omega_0 + 2\pi j/T$ and $\omega_0 = -\pi/\delta t$.
Similarly, if $\theta(t)$ is a stationary Gaussian random process with
power spectral density $S_\theta(\omega)$,
\begin{align}
G_{ab} &\approx \frac{\delta t}{p}
\sum_{j=0}^{p-1} \frac{1}{S_\theta(\omega_j)}\exp\Bk{i\omega_j(t_b-t_a)}.
\end{align}
As $V_{ja} \equiv \exp(-i\omega_jt_a)/\sqrt{p}$ is a unitary matrix,
the inverse of $K + G$ can be computed analytically to give
\begin{align}
\Bk{\bk{K+G}^{-1}}^{ab} &\approx \frac{1}{T}
\sum_{j=0}^{p-1} \frac{\exp\Bk{i\omega_j\bk{t_a-t_b}}}
{4S_q(\omega_j)/\hbar^2 + 1/S_\theta(\omega_j)}.
\end{align}
$u$, as given by Eq.~(\ref{ha}), does not depend on $\theta$.  If the
dynamics of the system is linear \cite{braginsky}, $K$ also does not
depend on $\theta$.  Thus, the same argument that leads to
Eq.~(\ref{Bmax_gauss}) can be used to give
\begin{align}
\mathsf Q_{\rm max} &= u_a\Bk{\bk{K+G}^{-1}}^{ab}u_b,
\\
&\approx
\frac{1}{T} \sum_{j=0}^{p-1} 
\frac{\delta t^2 h_a h_b\exp\Bk{i\omega_j\bk{t_a-t_b}}}
{4S_q(\omega_j)/\hbar^2+1/S_\theta(\omega_j)}.
\end{align}
Taking the continuous and long time limit with $\delta t \to 0$,
$T\to\infty$, and $d\omega = 2\pi/T$ hence results in
\begin{align}
\mathsf Q_{\rm max} &\to \intall
\frac{|\tilde h(\omega)|^2}
{4S_q(\omega)/\hbar^2+1/S_\theta(\omega)}\frac{d\omega}{2\pi},
\label{Qmax_waveform}
\\
\tilde h(\omega) &\equiv 
\intall h(t)\exp(-i\omega t)dt.
\end{align}
If $\beta = \theta(\tau)$ with $h(t) = \delta(t-\tau)$ and
$|\tilde h(\omega)| = 1$, Eq.~(\ref{Qmax_waveform}) agrees with the
result in Ref.~\cite{twc}.  Compared with Ref.~\cite{twc}, which
derives a quantum bound on $\avg{\mathsf R}$ directly, the derivation
here clarifies the relation of Eq.~(\ref{Qmax_waveform}) to the
Helstrom information and the Gill-Levit formalism. The new insight
implied by the theory here is that the bound remains invariant upon
any reparametrization and cannot be further improved by picking a
different $v$.

While Eq.~(\ref{Qmax_waveform}) holds for any measurement, it can say
something more about measurements in the linear form of
\begin{align}
X(t) &= \intall h_X(t-t')\theta(t')dt' + Z(t),
\label{X_process}
\end{align}
where $h_X$ is an impulse-response function of the system and $Z$
is a stationary noise process that is uncorrelated with $\theta$. In
optomechanics, such a process can be obtained by homodyne detection of
the output light. Let the estimator be
\begin{align}
\check\beta &= \intall \check h(t) X(t) dt,
\end{align}
where $\check h(t)$ is a linear filter, or more precisely a smoother
in control-theoretic terminology, as it is applied to the whole
observation record to estimate the waveform at an intermediate time
\cite{smooth,*smooth_pra1,*smooth_pra2}.  By standard Wiener filtering
theory \cite{vantrees}, the minimum mean-square risk in the SPLOT
limit is
\begin{align}
\Avg{\mathsf R} &\to
\intall\frac{|\tilde h(\omega)|^2}
{|\tilde h_X(\omega)|^2/S_Z(\omega)+1/S_\theta(\omega)}
\frac{d\omega}{2\pi},
\label{R_waveform}
\\
\tilde h_X(\omega) &\equiv \intall h_X(t)\exp(-i\omega t) dt,
\end{align}
where $S_Z$ is the power spectral density of $Z$.  Comparing
Eqs.~(\ref{Qmax_waveform}) and (\ref{R_waveform}), one sees that
$\avg{\mathsf R} \ge \mathsf Q_{\rm max}$ implies
\begin{align}
\frac{S_Z(\omega)}{|\tilde h_X(\omega)|^2} 
&\ge \frac{\hbar^2}{4S_q(\omega)},
\end{align}
which serves as a fundamental quantum limit on the noise floor. To
reach this limit for an optomechanical system, backaction evasion and
quantum-limited measurements are necessary \cite{twc}. It is possible
to derive alternative quantum limits in terms of the optics by
appealing to the interaction picture and tighter limits that account
for loss by choosing the purification of the quantum state judiciously
\cite{tsang_open}.  Reference~\cite{iwasawa} reports an experimental
demonstration of mirror-motion estimation close to such quantum
limits.

It is noteworthy that, prior to Ref.~\cite{twc}, Braginsky and
coworkers derived an expression similar to Eq.~(\ref{K_waveform}) by
optimizing a signal-to-noise ratio (SNR) in terms of an observable
\cite{braginsky}.  A spectral form of their optimal SNR, derived from
a heuristic energy-time uncertainty relation, can be found in
Ref.~\cite{bgkt}. They called their results the energetic quantum
limit.  The similarity is not a coincidence, as the Helstrom
information can also be expressed as the solution to the optimization
problem 
\begin{align}
\mathsf K &= \max_Y \frac{(v^a\partial_a \bar Y)^2}
{\trace(Y - \bar Y)^2 \dop},
\label{SNR}
\\
\bar Y &\equiv \trace Y \dop,
\end{align}
where $Y$ is any observable and the right-hand side of Eq.~(\ref{SNR})
is similar to the SNR studied in
Ref.~\cite{braginsky}. Equation~(\ref{SNR}) can be proved by applying
the Cauchy-Schwarz inequality to
$(v^a\partial_a \bar Y)^2 = (\trace Y v^a\partial_a\dop)^2 = [\trace
(Y-\bar Y)v^a\partial_a\dop]^2 = \{\trace [(Y - \bar Y) \circ
(v^a\mathcal S_a)] \dop\}^2 \le [\trace (Y-\bar Y)^2\dop] [\trace
(v^a\mathcal S_a)^2 \dop]$.  While their results are seminal and
capture the basic physics, the results here and in Ref.~\cite{twc} are
more precise in terms of meaning. The SNR does not have a direct
operational meaning in statistics, whereas here the statistical
problem is clearly defined in terms of a mean-square risk, and the
bound is proven to hold for any POVM and any biased or unbiased
estimator, not just observables. The clear definition of a risk is
important, as different problems have different types of risk and
different optimal measurements, and no single SNR-based treatment can
deal with all of them.  For example, while a linear measurement in the
form of Eq.~(\ref{X_process}) can achieve the optimal SNR and also
optimal waveform estimation, more careful studies reveal that it is
suboptimal with respect to the quantum limits for waveform detection
\cite{tsang_nair} and spectrum parameter estimation \cite{ng16}, and
photon-counting measurements can perform much better for those
problems.

Equation~(\ref{Qmax_waveform}) demonstrates the importance of prior
information in the form of $1/S_\theta(\omega)$, as the integral may
not converge without it; see Ref.~\cite{berry13} for an example in
optical phase estimation. If $\beta = \theta(\tau)$,
Eqs.~(\ref{Qmax_waveform}) and (\ref{R_waveform}) are steady-state
values that do not scale with $T$. This is an extreme example where
the i.i.d.\ condition does not hold, the standard asymptotic theory
\cite{vaart,hayashi} fails, the convergence rate is slower than the
parametric rate, and prior information is indispensable.  The
information that can be acquired in one time slot with duration
$\delta t$ is infinitesimal, but a finite risk can still be achieved
because there exist prior correlations in $\theta(t)$ across different
times before and after $t = \tau$, meaning that information over
multiple time slots can contribute to the estimation of each
$\theta(\tau)$. This intuition explains why the optimal estimator is a
smoother.

\subsection{\label{sec_imaging}Subdiffraction incoherent optical
  imaging}

For another application of quantum estimation theory, consider the
far-field paraxial imaging of $p$ spatially incoherent and equally
bright point sources \cite{goodman}, as depicted in
Fig.~\ref{spade}. On the image plane, the density operator of each
photon can be modeled as \cite{tnl}
\begin{align}
\dop(\theta) &= \frac{1}{p}\sum_{a=1}^p
\exp\bk{-ik\theta^a}\ket{\Psi}\bra{\Psi} \exp\bk{ik \theta^a},
\\
\ket{\Psi} &= \intall dx \Psi(x)\ket{x},
\end{align}
where $\theta$ is a vector of the unknown source positions on the
object plane that is assumed to be one-dimensional for simplicity,
$\ket{x}$ is the Dirac eigenket for the image-plane photon position
that obeys $\braket{x|x'} = \delta(x-x')$, with an image-plane
coordinate $x$ that is normalized with respect to the magnification
factor, $\Psi$ is the point-spread function of the imaging system for
the optical field, and $k$ is the momentum operator.

\begin{figure}[htbp!]
\centerline{\includegraphics[width=0.45\textwidth]{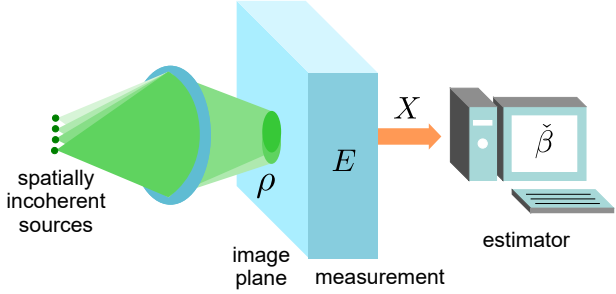}}
\caption{\label{spade}Basic setup of the optical imaging problem.}
\end{figure}

Direct imaging can be modeled as a measurement of each photon in the
position basis \cite{tnl}. The probability density of each observed
position is then
\begin{align}
f(x|\theta) &= \bra{x}\dop(\theta)\ket{x} = 
\frac{1}{p}\sum_{a=1}^p h(x-\theta^a),
\\
h(x) &\equiv \abs{\Psi(x)}^2.
\end{align}
The Fisher information is
\begin{align}
v^a F_{ab}(\theta) v^b &= 
\intall \frac{[v^a \partial_ah(x-\theta^a)]^2}{p^2f(x|\theta)} dx.
\end{align}
In particular, at $\theta = 0$,
\begin{align}
v^a F_{ab}(0) v^b &= \bk{v^a w_a}^2
\intall \frac{1}{h(x)} \Bk{\parti{h(x)}{x}}^2 dx,
\\
w_a &= \frac{1}{p},\quad a = 1,\dots,p.
\end{align}
The kernel of $F(0)$ is then the $(p-1)$-dimensional space
\begin{align}
\ker\Bk{F(0)} &= \BK{v \in \mathbb R^p: v^aw_a = 0},
\end{align}
while the range is the one-dimensional space
\begin{align}
\range\Bk{F(0)} &= \BK{c w: c \in \mathbb R}.
\end{align}
Assume hereafter that $\beta$ is a linear function of $\theta$, such
that $u$ is $\theta$-independent.  For any $\beta$ with
$u \notin \range[F(0)]$, a $v \in \ker\Bk{F(0)}$ can always be found
such that $v^au_a \neq 0$ but $v^a F_{ab}(0) v^b = 0$.
Section~\ref{sec_wave} then implies that, from the minimax
perspective, any estimator of this $\beta$ must have a convergence
rate slower than the parametric rate with respect to $n$ detected
photons. Only a $\beta$ with $u \in \range[F(0)]$ has a nonzero
information at $\theta =0$ for any $v$ with $v^au_a \neq 0$. This
$\beta$ is proportional to the object centroid
$w_a\theta^a = (\sum_a\theta^a)/p$, and the parametric rate is indeed
possible by taking the sample mean of the photon positions, provided
that $h$ has a finite variance \cite{tsang19b}.

For $p = 2$, other than the centroid, the second parameter may be
taken as the separation $|\theta^2-\theta^1|$ between the two
sources. Reference~\cite{tsang18} uses a special case of
Theorem~\ref{thm_rate} to prove that, since the exponent of the Fisher
information is $m = 2$ for $u \propto v \propto (1,-1)$, a limit on
the convergence rate is $\mathsf B_{\rm worst} = \mathit\Omega(n^{-1/2})$. This
rate is also observed numerically in
Refs.~\cite{tham17,tsang18}. Pa\'ur and coworkers showed that the
exponent can be improved to $m = 1$ if the point-spread function has
zeros \cite{paur18,*paur19}, and the limit becomes
$\mathsf B_{\rm worst} = \mathit\Omega(n^{-2/3})$ according to
Theorem~\ref{thm_rate}.

The Helstrom information turns out to be much higher
\cite{tnl,bisketzi19}. For $n$ detected photons and i.i.d.\ quantum
states, the Helstrom information is simply $n$ times that for one
photon \cite{hayashi}.  For $p = 2$, $K(\theta)$ turns out to be
full-rank \cite{tnl}, and separation estimation at the parametric rate
is also possible via spatial-mode demultiplexing \cite{tsang18}.  For
$p \ge 2$, Bisketzi and coworkers found that $K(\theta)$ has a rank of
two as $\theta \to 0$ \cite{bisketzi19}. Then Sec.~\ref{sec_wave}
implies that any $\beta$ with a $u \notin \range[K(0)]$ cannot be
estimated at the parametric rate by any measurement, and only a
$\beta$ with $u$ in the two-dimensional range may be estimated at the
parametric rate.





\section{Conclusion}
Compared with the local theory, the use of Bayesian Cram\'er-Rao
bounds has been less systematic in the literature and often relied on
the ingenuity of the researcher to pick the appropriate form. This
work resolves some of the ambiguities and hopefully inspires further
progress via the physics connections. 

The formalism here looks ripe for a generalization for
infinite-dimensional parameter spaces in a manner similar to the local
theory \cite{bickel93,tsang20,stein56}.  An important application
would be to derive semiparametric bounds with slow convergence rates
\cite{gill95} in a more systematic fashion.

\section*{Acknowledgment}
This work is supported by the National Research Foundation (NRF)
Singapore, under its Quantum Engineering Programme (Grant No.~QEP-P7).

\appendix
\section{\label{sec_vectoral}Vectoral parameter of interest}
Here I generalize the fundamental results in
Secs.~\ref{sec_invar}--\ref{sec_quantum} for a vector parameter of
interest $\beta(\theta) = (\beta^1,\dots,\beta^q) \in \mathbb R^q$
with $1 \le q \le p$. Define the mean-square risk as
\begin{align}
\mathsf R(\theta) &\equiv 
\int \Bk{\check\beta^j(x) - \beta^j(\theta)}
\gamma_{jk}(\theta)\Bk{\check\beta^k(x) - \beta^k(\theta)}
\nonumber\\&\quad
\times f^{(n)}(x|\theta) d\mu^{(n)}(x),
\end{align}
where $\gamma$ is a positive-definite weight matrix.  For clarity,
indices starting from $j$ are used to label the components of $\beta$,
to be distinguished from indices that start from $a$ for the
components of $\theta$.  The Bayesian risk is
\begin{align}
\Avg{\mathsf R} &= \int \mathsf R(\theta)\pi(\theta) d^p\theta.
\end{align}
Define
\begin{align}
u_a^j &\equiv \partial_a \beta^j.
\end{align}
The Gill-Levit bounds $\mathsf B$ still have the form of
Eq.~(\ref{GL}), but now \cite{gill95}
\begin{align}
\mathsf A &\equiv v_j^au_a^j,
\\
\mathsf F &\equiv \gamma^{jk} v_j^a F_{ab} v_k^b,
\\
\mathsf P &\equiv \gamma^{jk}\Bk{\frac{1}{\pi}\partial_a\bk{\pi v_j^a}}
\Bk{\frac{1}{\pi}\partial_b\bk{\pi v_k^b}},
\end{align}
where $v$ now has $q\times p$ entries and
\begin{align}
\gamma^{jk} &\equiv (\gamma^{-1})^{jk}.
\end{align}
$\{\gamma^{jk}\}$ are the entries of the $B$ matrix in
Ref.~\cite{gill95}, while $\{v_j^a\}$ are the entries of the $C$
matrix in Ref.~\cite{gill95}.

Upon reparametrization of $\theta$, $\gamma$ should remain invariant,
in the sense of $\gamma(\theta) = \tilde\gamma(\tilde\theta(\theta))$,
so that the statistical problem remains unchanged.

It is straightforward to generalize Prop.~\ref{prop_contra}.
\begin{proposition}
  $\mathsf B$ is invariant under reparametrization if $v_j$ for each
  $j$ obeys the transformation law
\begin{align}
v_j^a J_a^b &= \tilde v_j^b.
\end{align}
\end{proposition}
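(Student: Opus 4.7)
The plan is to follow the proof of Proposition~\ref{prop_contra} term by term, treating the extra index $j$ as a spectator that does not transform under reparametrization of $\theta$. Since the weight matrix $\gamma_{jk}$ is assumed invariant and $\beta^j$ is invariant for each $j$, the only objects carrying $\theta$-type indices inside $\mathsf A$, $\mathsf F$, and $\mathsf P$ are $u_a^j$, $F_{ab}$, $v_j^a$, and the divergence $\pi^{-1}\partial_a(\pi v_j^a)$. The hypothesis $v_j^a J_a^b = \tilde v_j^b$ says that each row $v_j$ transforms as a contravariant vector on $\Theta$, so the scalar-case arguments apply row by row.

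First I would handle $\mathsf A$. Since $\beta^j$ is invariant, $u_a^j = \partial_a \beta^j = J_a^b \tilde\partial_b \tilde\beta^j = J_a^b \tilde u_b^j$, which together with the hypothesis immediately gives
\begin{align}
\mathsf A = v_j^a u_a^j = v_j^a J_a^b \tilde u_b^j = \tilde v_j^b \tilde u_b^j.
\end{align}
Next I would treat $\mathsf F$. Using the covariance $F_{ab} = J_a^c \tilde F_{cd} J_b^d$ from Eq.~(\ref{covariant}) and the hypothesis twice, the quadratic form $v_j^a F_{ab} v_k^b$ goes over to $\tilde v_j^c \tilde F_{cd} \tilde v_k^d$ for every pair $(j,k)$; contracting with the invariant $\gamma^{jk}$ then shows $\mathsf F$ is invariant.

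The main work is the divergence term in $\mathsf P$, but this requires no new ideas. For each fixed $j$, the quantity $\pi^{-1}\partial_a(\pi v_j^a)$ is exactly the object studied in the scalar proof of Proposition~\ref{prop_contra}, with $v^a$ replaced by $v_j^a$. The algebraic manipulation using the inverse Jacobian $\tilde J$, Jacobi's formula for $\tilde\partial_c \ln |\tilde J|$, and the commutativity of second partial derivatives in Eqs.~(\ref{formula}) and the following display go through verbatim, yielding
\begin{align}
\frac{1}{\pi}\partial_a\bk{\pi v_j^a} = \frac{1}{\tilde\pi}\tilde\partial_b\bk{\tilde\pi \tilde v_j^b}
\end{align}
for each $j$. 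Contracting with $\gamma^{jk}$ and the analogous expression for $k$ gives the invariance of $\mathsf P$.

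Finally, since $\avg{\cdot}$ is invariant (the transformation of $d^p\theta$ cancels that of $\pi$ by Eq.~(\ref{covariant})), all of $\avg{\mathsf A}$, $\avg{\mathsf F}$, and $\avg{\mathsf P}$ are invariant, hence so is $\mathsf B$. I do not expect any real obstacle: the only nontrivial calculation is the divergence identity, and it has already been done in the scalar case, so the vectoral proof is essentially a bookkeeping exercise in which the index $j$ rides along as an inert label.
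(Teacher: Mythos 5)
Your proof is correct and is precisely the argument the paper has in mind: the paper omits the proof as ``almost identical to that of Prop.~\ref{prop_contra},'' and your write-up simply carries out that identical argument with the index $j$ riding along as an inert label and $\gamma^{jk}$ contracted at the end. No differences to report.
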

\begin{proof}
  Almost identical to that of Prop.~\ref{prop_contra} and omitted for
  brevity.
\end{proof}
For a manifold $\Theta$, a generalization of
Prop.~\ref{prop_invariant} is as follows.
\begin{proposition}
  If $\prior v$ vanishes on any boundary of $\Theta$, the Bayesian
  mean-square risk has a lower bound given by Eq.~(\ref{GL}), where
\begin{align}
\Avg{\mathsf A} &\equiv \int \bk{v_j^a u_a^j}  \prior \epsilon,
\label{Avec}
\\
\Avg{\mathsf F} &\equiv \int \bk{\gamma^{jk}v_j^a F_{ab} v_k^b} \prior \epsilon,
\\
\Avg{\mathsf P} &\equiv 
\int \gamma^{jk}\Bk{\frac{1}{\prior}\nabla_a\bk{\prior v_j^a}}
\Bk{\frac{1}{\prior}\nabla_b\bk{\prior v_k^b}} \prior\epsilon.
\label{Pvec}
\end{align}
\end{proposition}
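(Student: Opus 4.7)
My plan is to follow the same template as the proof of Proposition~\ref{prop_invariant}, applying it componentwise and then tying the components together via a $\gamma$-weighted Cauchy--Schwarz inequality. The new wrinkle is that the $\gamma$ and $\gamma^{-1}$ weights must appear on the correct sides of the inequality, so that one side recognizes the risk $\mathsf R$ and the other reproduces $n\langle\mathsf F\rangle+\langle\mathsf P\rangle$.

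First, for each $j\in\{1,\dots,q\}$, I define the componentwise bias $\mathsf b^j\equiv \int (\check\beta^j-\beta^j)f^{(n)}d\mu^{(n)}$ and apply the Leibniz rule under the covariant derivative to
\begin{equation}
\int \nabla_a\bk{\mathsf b^j \prior v_j^a}\epsilon
= \iint \bk{\check\beta^j-\beta^j}\nabla_a\bk{f^{(n)}\prior v_j^a}d\mu^{(n)}\epsilon - \int \bk{v_j^a\nabla_a\beta^j}\prior\epsilon.
\end{equation}
By the Stokes theorem and the boundary hypothesis $\prior v_j=0$ on $\partial\Theta$, the left-hand side vanishes for each $j$; summing over $j$ and using $\nabla_a\beta^j=\partial_a\beta^j=u^j_a$ yields
\begin{equation}
\Avg{\mathsf A}=\expect\bk{(\check\beta^j-\beta^j)\,s_j},\qquad s_j \equiv \frac{1}{f^{(n)}\prior}\nabla_a\bk{f^{(n)}\prior v_j^a}.
\end{equation}

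Second, I insert $\delta^k_j=\gamma_{jm}\gamma^{mk}$ and apply the Cauchy--Schwarz inequality in the Hilbert space of vector-valued functions with inner product $\langle a,b\rangle_\gamma\equiv\expect[a^j\gamma_{jk}b^k]$. Writing $\langle\mathsf A\rangle=\expect[(\check\beta^j-\beta^j)\gamma_{jm}(\gamma^{mk}s_k)]$, this gives
\begin{equation}
\Avg{\mathsf A}^2 \le \expect\bk{(\check\beta^j-\beta^j)\gamma_{jk}(\check\beta^k-\beta^k)}\cdot\expect\bk{\gamma^{jk}s_j s_k}=\Avg{\mathsf R}\cdot\expect\bk{\gamma^{jk}s_j s_k},
\end{equation}
where the contraction $\gamma^{jl}\gamma_{jm}\gamma^{mn}=\gamma^{ln}$ (from $\gamma_{jm}\gamma^{jl}=\delta^l_m$) produces $\gamma^{jk}$ on the score side.

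Third, I split $s_j=(f^{(n)})^{-1}v_j^a\nabla_a f^{(n)}+\prior^{-1}\nabla_a(\prior v_j^a)$. The conditional expectation of the first piece given $\theta$ vanishes by the usual premise $\int\nabla_a f\,d\mu=0$, so its cross term with the second piece (which is $\theta$-only) drops out. For the surviving diagonal terms, the i.i.d.\ Fisher identity gives
\begin{equation}
\expect\Bk{\gamma^{jk}\bk{\frac{v_j^a\nabla_a f^{(n)}}{f^{(n)}}}\bk{\frac{v_k^b\nabla_b f^{(n)}}{f^{(n)}}}\bigg|\,\theta}=n\,\gamma^{jk}v_j^a F_{ab}v_k^b,
\end{equation}
whose prior expectation is $n\langle\mathsf F\rangle$, while the second piece contributes exactly $\langle\mathsf P\rangle$ as defined in Eq.~(\ref{Pvec}). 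Combining these yields $\expect[\gamma^{jk}s_j s_k]=n\langle\mathsf F\rangle+\langle\mathsf P\rangle$, which together with the Cauchy--Schwarz step gives Eq.~(\ref{GL}).

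The main obstacle, and the only substantive difference from the scalar case, is choosing the $\gamma$-weighted inner product so that Cauchy--Schwarz places $\gamma$ on the risk side and $\gamma^{-1}$ on the score side; once this index gymnastics is settled, the remaining steps are direct componentwise generalizations of Proposition~\ref{prop_invariant}.
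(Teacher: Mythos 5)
Your proposal is correct and follows essentially the same route as the paper: componentwise Leibniz rule plus Stokes to identify $\langle\mathsf A\rangle$ with a score covariance, a $\gamma$-weighted Cauchy--Schwarz inequality placing $\gamma$ on the risk side and $\gamma^{-1}$ on the score side, and the standard score decomposition giving $n\langle\mathsf F\rangle+\langle\mathsf P\rangle$. The only cosmetic difference is that you work with the lower-index score $s_j$ where the paper uses $s^j=\gamma^{jk}s_k$, and you spell out the ``standard procedures'' that the paper leaves implicit.
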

\begin{proof}
Let 
\begin{align}
\mathsf b^j &\equiv \int \bk{\check\beta^j - \beta^j} f^{(n)} d\mu^{(n)}.
\end{align}
By the Leibniz rule,
\begin{align}
\int \nabla_a\bk{\mathsf b^j \prior v_j^a} \epsilon
&= \iint \bk{\check\beta^j-\beta^j} \nabla_a\bk{f^{(n)}\prior v_j^a}
d\mu^{(n)}\epsilon
\nonumber\\&\quad
- \int \bk{v_j^a\nabla_a\beta^j} \prior \epsilon.
\end{align}
The left-hand side is zero by the Stokes theorem, 
if $\prior v$ vanishes on any boundary of $\Theta$. Then
\begin{align}
\Avg{\mathsf A} &= 
\expect\Bk{\bk{\check\beta^j-\beta^j} \gamma_{jk} s^k },
\label{A_expect}
\\
s^j &\equiv \frac{\gamma^{jk}}{f^{(n)}\prior}\nabla_a\bk{f^{(n)}\prior v_k^a}.
\end{align}
Considering the right-hand side of Eq.~(\ref{A_expect}) as an inner
product between $(\check\beta-\beta)$ and $s$ that is weighted by
$\gamma$ and applying the Cauchy-Schwarz inequality, I obtain
\begin{align}
\Avg{\mathsf A}^2 &\le \Avg{\mathsf R}\expect\bk{s^j \gamma_{jk} s^k}.
\end{align}
Standard procedures then lead to Eqs.~(\ref{GL}) and
(\ref{Avec})--(\ref{Pvec}).
\end{proof}

A generalization of Theorem~\ref{thm_optimal} is as follows.
\begin{theorem}
\label{thm_optimal_vec}
\begin{align}
\max_v \mathsf B = \Avg{u,L^{-1}u}_\prior \equiv \mathsf B_{\rm max},
\end{align}
where the inner product is defined as
\begin{align}
\Avg{v,u}_\prior &\equiv \int \bk{v_j^{a}u_a^j} \prior \epsilon,
\end{align}
the linear, self-adjoint, and positive-semidefinite operator $L$ is
defined as
\begin{align}
(Lv)_a^j &\equiv
n \gamma^{jk} F_{ab} v_k^{b} -\nabla_a\Bk{\frac{\gamma^{jk}}{\prior}\nabla_b\bk{\prior v_k^b}},
\label{L_vec}
\end{align}
and $u$ is assumed to be in the range of $L$, such that $L^{-1}u$
exists. A least favorable $v$ that maximizes $\mathsf B$ must satisfy
\begin{align}
v &\propto L^{-1} u.
\end{align}
\end{theorem}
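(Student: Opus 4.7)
The plan is to follow the exact structure of the proof of Theorem~\ref{thm_optimal}, with the added bookkeeping due to the extra index $j$ running over the components of $\beta$ and the weight matrix $\gamma^{jk}$. First I would rewrite the three functionals of Proposition~\ref{prop_invariant} (in its vectoral generalization) as quadratic forms with respect to the inner product $\avg{\cdot,\cdot}_\prior$ defined in the theorem. By definition, $\avg{\mathsf A} = \avg{v,u}_\prior$, and
\begin{align}
\avg{\mathsf F} &= \int\bk{\gamma^{jk} v_j^a F_{ab} v_k^b}\prior\epsilon = \avg{v,Fv}_\prior,
\end{align}
where $(Fv)_a^j \equiv \gamma^{jk}F_{ab}v_k^b$. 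Symmetry of $F_{ab}$ and $\gamma^{jk}$ guarantees that $F$ (viewed as an operator on vector fields indexed by $j$) is self-adjoint and positive-semidefinite with respect to $\avg{\cdot,\cdot}_\prior$.

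Next I would handle $\avg{\mathsf P}$ by integration by parts, mirroring Eq.~(\ref{parts}). Since $\gamma^{jk}$ is a function of $\theta$ that carries no $\theta$-indices, the sum over $j$ simply passes through, and the Stokes theorem together with the assumed boundary condition $\prior v = 0$ yields
\begin{align}
\avg{\mathsf P} &= -\int v_j^a \nabla_a\Bk{\frac{\gamma^{jk}}{\prior}\nabla_b\bk{\prior v_k^b}}\prior\epsilon = \avg{v,Pv}_\prior,
\end{align}
with $(Pv)_a^j \equiv -\nabla_a[\gamma^{jk}\prior^{-1}\nabla_b(\prior v_k^b)]$. To check that $P$ is self-adjoint, I would apply integration by parts twice on the expression $\avg{w,Pv}_\prior$, using the symmetry $\gamma^{jk}=\gamma^{kj}$ to swap the roles of $v$ and $w$; positive-semidefiniteness then follows from $\avg{v,Pv}_\prior = \int \gamma^{jk} s_j s_k \prior\epsilon \ge 0$, where $s_j \equiv \prior^{-1}\nabla_a(\prior v_j^a)$. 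Combining, $L = nF + P$ is linear, self-adjoint, and positive-semidefinite, and $n\avg{\mathsf F}+\avg{\mathsf P} = \avg{v,Lv}_\prior$.

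Assuming $L^{-1}u$ exists, I would finish by the same Cauchy--Schwarz argument as in Theorem~\ref{thm_optimal}:
\begin{align}
\mathsf B &= \frac{\avg{v,u}_\prior^2}{\avg{v,Lv}_\prior} = \frac{\avg{L^{1/2}v,L^{-1/2}u}_\prior^2}{\avg{L^{1/2}v,L^{1/2}v}_\prior} \le \avg{u,L^{-1}u}_\prior,
\end{align}
with equality if and only if $L^{1/2}v \propto L^{-1/2}u$, i.e.\ $v \propto L^{-1}u$. The only nontrivial step is the careful bookkeeping when verifying self-adjointness of $P$ in the presence of the weight matrix $\gamma^{jk}$; once this index juggling is done correctly, the rest of the proof is a verbatim repetition of the scalar argument, so I would keep the write-up short and refer back to Theorem~\ref{thm_optimal} for the shared portions.
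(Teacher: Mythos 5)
Your proposal is correct and is exactly the argument the paper intends: the paper's own proof of this theorem is simply ``Similar to that of Theorem~\ref{thm_optimal} and omitted for brevity,'' and your write-up fills in precisely that scalar argument with the extra $j$ index and $\gamma^{jk}$ bookkeeping (quadratic forms in $\avg{\cdot,\cdot}_\prior$, integration by parts via Stokes for $P$, then Cauchy--Schwarz with $L^{1/2}$). No gaps.
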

\begin{proof}
Similar to that of Theorem~\ref{thm_optimal} and omitted for brevity.  
\end{proof}
With the substitution $\prior= \psi^2$, Eqs.~(\ref{A_wave})--(\ref{D})
can be generalized to
\begin{align}
\Avg{\mathsf A} &= \int \bk{v_j^a u_a^j} \psi^2\epsilon,
\\
\Avg{\mathsf F} &= \int  \bk{\gamma^{jk} v_j^a F_{ab} v_k^b}\psi^2\epsilon,
\\
\Avg{\mathsf P} &= \int \gamma^{jk}\bk{\mathsf D_j\psi}\bk{\mathsf D_k\psi}\epsilon
\\
\mathsf D_j\psi &\equiv \bk{\nabla_a v_j^a}\psi + 2 v_j^a\nabla_a\psi,
\end{align}
while Eqs.~(\ref{P_wave2}), (\ref{D_dagger}), and (\ref{H})
can be generalized to
\begin{align}
\Avg{\mathsf P} &= \Avg{\psi,\mathsf D_j^\dagger\bk{\gamma^{jk} \mathsf D_k\psi}},
\\
\mathsf D_j^\dagger \psi &= \bk{\nabla_a v_j^a}\psi - 2 v_j^a\nabla_a\psi,
\\
\mathsf H \psi &\equiv  
n\mathsf F \psi + \mathsf D_j^\dagger\bk{\gamma^{jk} \mathsf D_k \psi}.
\end{align}
If a parametrization with $g_{ab} = \delta_{ab}$ can be assumed and
$v$ is $\theta$-independent, a further simplification is
\begin{align}
\mathsf H \psi &= n \mathsf F \psi -
4v_j^a\partial_a\bk{\gamma^{jk}v_k^b\partial_b\psi}.
\end{align}
The last term becomes the Laplacian $\partial_a\partial^a \psi$ if
$q=p$ and $v$ and $\gamma$ are assumed to be identity
matrices.

To apply the preceding results to quantum problems, Eq.~(\ref{nagaoka})
can be generalized to
\begin{align}
\mathsf F &= \gamma^{jk} v_j^a F_{ab} v_k^b \le 
\gamma^{jk} v_j^a K_{ab} v_k^b,
\end{align}
as both $K_{ab}-F_{ab}$ and $\gamma^{jk} v_j^a v_k^b$ are
positive-semidefinite.  Equations~(\ref{Qmax}) and (\ref{R}) can then
be generalized by redefining $R$ as
\begin{align}
(Rv)_a^j &\equiv
\gamma^{jk} K_{ab} v_k^{b} 
-\nabla_a\Bk{\frac{\gamma^{jk}}{\prior}\nabla_b\bk{\prior v_k^b}}.
\end{align}

\bibliography{research2}

\end{document}